%
%
%
%
%
\RequirePackage{fix-cm}
\documentclass[natbib]{svjour3}                     
\smartqed  
\usepackage[american]{babel}		
\usepackage[utf8]{inputenc}		
\usepackage[T1]{fontenc}		
\usepackage[autostyle,english=american]{csquotes}	
\usepackage{graphicx}			
\usepackage{subfig}				
\usepackage{marvosym}			

\usepackage{amsmath}
\usepackage{amsfonts}
\usepackage{amssymb}
\usepackage{mathtools}
\usepackage{nicefrac}
\usepackage{algorithm2e}
\usepackage{booktabs}

\usepackage{pslatex}
\usepackage{xcolor}


%
%
%
%

\begin{document}

\title{On the Shapley Value of Unrooted Phylogenetic Trees}




\author{Kristina Wicke \and Mareike Fischer}


\institute{K. Wicke \at Institute of Mathematics and Computer Science, Ernst-Moritz-Arndt University, Greifswald, Germany \\  \email{kristina.wicke@uni-greifswald.de}
           \and
           M. Fischer (\Letter) \at Institute of Mathematics and Computer Science, Ernst-Moritz-Arndt University, Greifswald, Germany \\ \email{email@mareikefischer.de}       
}

\date{Received: date / Accepted: date}

\maketitle

\begin{abstract}
The Shapley value, a solution concept from cooperative game theory, has recently been considered for both unrooted and rooted phylogenetic trees. Here, we focus on the Shapley value of unrooted trees and first revisit the so-called split counts of a phylogenetic tree and the Shapley transformation matrix that allows for the calculation of the Shapley value from the edge lengths of a tree. We show that non-isomorphic trees may have permutation-equivalent Shapley transformation matrices and permutation-equivalent null spaces. This implies that estimating the split counts associated with a tree or the Shapley values of its leaves does not suffice to reconstruct the correct tree topology. We then turn to the use of the Shapley value as a prioritization criterion in biodiversity conservation and compare it to a greedy solution concept. Here, we show that for certain phylogenetic trees, the Shapley value may fail as a prioritization criterion, meaning that the diversity spanned by the top $k$ species (ranked by their Shapley values) cannot approximate the total diversity of all $n$ species.
\end{abstract}

\keywords{Phylogenetic tree \and Shapley value \and Shapley transformation \and Noah's ark problem}

\section{Introduction}
The \emph{Shapley value} of phylogenetic trees has been frequently discussed as a prioritization tool in biodiversity conservation. 
It was first introduced by \citet{Haake2008} for unrooted phylogenetic trees, but has also been considered for rooted phylogenetic trees (cf. \citet{Hartmann2013,Fuchs2015, Wicke2017}). 
Here, we focus on unrooted phylogenetic trees and answer some of the questions posed in \citet{Haake2008}.
On the one hand, we consider the relationship between the tree topology, the so-called \emph{split counts} of a tree and the Shapley value. In particular, we show that non-isomorphic trees can have permutation-equivalent Shapley transformation matrices and, as a consequence, identical Shapley values. This means that estimating the Shapley values or split counts from data (and not inferring them from a tree) does not suffice to reconstruct the corresponding tree.

On the other hand, we consider the use of the Shapley value as a ranking criterion in the so-called \emph{Noah's ark problem} (\citet{Weitzman1998}) and compare it to a greedy solution concept (\citet{Steel2005}). We show that the Shapley value can perform very badly as a prioritization criterion for a certain class of phylogenetic trees. In fact, we show that the diversity of the top $k$ Shapley species (i.e., the $k$ species with the highest Shapley values), may not approximate the total diversity of all species at all, while the total diversity is well captured by the top $k$ greedy species (i.e., the $k$ species chosen by a greedy approach). 

The paper is organized as follows. After introducing some basic definitions and notations we turn to the Shapley transformation matrix of a phylogenetic tree and recall some known results. We then show that non-isomorphic trees can have permutation-equivalent Shapley transformation matrices. We conclude this paper by considering the Shapley value as a prioritization criterion in the Noah's ark problem (\citet{Weitzman1998}). 

\section{Preliminaries}
Let $T=(V(T),E(T))$ be a tree with nodes $V(T)$, edges $E(T)$, leaves $V_L \subseteq V(T)$ and no nodes of degree 2.
Let $X$ be a set of taxa and let $\phi: X \rightarrow V_L$ be a bijective mapping from the set of taxa into the set of leaves of $T$ ($X$ is therefore sometimes called \emph{leaf set}).
Then $\mathcal{T} \coloneqq (T,\phi)$ is called a \emph{phylogenetic $X$-tree} with \emph{treeshape/topology} $T$.
If all internal nodes are of degree $3$, we call $\mathcal{T}$ a \emph{binary phylogenetic $X$-tree}.
Without loss of generality we assume $X=\{1, \ldots, n\}$ and use $n = \vert X \vert$ to denote the number of leaves of a tree.
When we write $\vert \mathcal{T} \vert$ we also mean the number of leaves of the tree.
In biology, often \emph{rooted} phylogenetic trees with a designated root node (representing the last common ancestor of all present-day species) are considered, but here we will mostly be concerned with \emph{unrooted} phylogenetic trees. 
Note that throughout this paper we always mean unrooted binary phylogenetic trees when we refer to trees unless stated otherwise.
Moreover, we assume all edges in a tree to have positive edge lengths assigned to them (e.g., representing evolutionary time between speciation events or substitution rates) and denote the length of an edge $k \in E(T)$ by $\alpha_k$ (cf. Figure \ref{Fig1}).

Given a weighted unrooted phylogenetic tree and a subset $S \subseteq X$ of taxa, the \emph{phylogenetic diversity} $PD(S)$ of $S$ is defined as the sum of edge lengths in the smallest spanning tree that connects the taxa in $S$.

In the following, we will consider the \emph{phylogenetic tree game} introduced by \citet{Haake2008}, which is a cooperative game associated with a phylogenetic tree. Recall that in game theory, a \emph{cooperative game} is a pair $(N, v)$ consisting of a set of players $N=\{1, \ldots, n\}$ and a \emph{characteristic function} $v: 2^N \rightarrow \mathbb{R}$ that assigns a real number to every coalition $S \in 2^N$ of players. 
Given a phylogenetic tree $\mathcal{T}$ we define the \emph{phylogenetic tree game} as the pair $(X, PD_{\mathcal{T}})$ consisting of the set of species $X$ and the phylogenetic diversity measure $PD$ that assigns a real value to all subsets $S \subseteq X$ of species (adapted from \citet{Haake2008}).
An important solution concept in cooperative game theory is the so-called \emph{Shapley value} that can also be used in the context of phylogenetic tree games. Given a phylogenetic tree game $(X, PD_{\mathcal{T}})$, the Shapley value is the vector $SV = (SV_i)$ defined as
\begin{equation} \label{eq_sv}
SV_i(X, PD_{\mathcal{T}}) = \frac{1}{n!} \sum\limits_{\substack{S \subseteq X \\ i \in S}} \Big( (\vert S \vert -1)! \, (n - \vert S  \vert)! \, ( PD(S) - PD(S \setminus \{i\}) \Big),
\end{equation}
where $n = \vert X \vert$ and $S$ denotes a subset of species containing taxon $i$.
Biologically, the Shapley value of a given species may be interpreted as the average contribution of a species to overall phylogenetic diversity and thus has been suggested as a prioritization criterion in biodiversity conservation (cf. \citet{Haake2008}). 

Note that the Shapley value of a phylogenetic tree game is a linear function of the edge weights of the tree. This linear transformation is called the \emph{Shapley transformation} in \citet{Haake2008} and is the main focus of the following section.

\begin{figure}[htbp]
	\centering
	\includegraphics[scale=0.6]{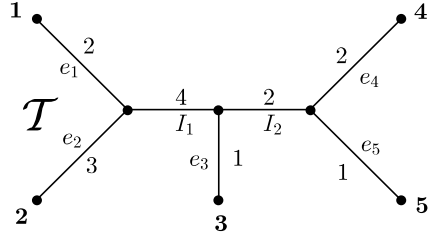}
	\caption{Phylogenetic $X$-tree $\mathcal{T}$ with leaf set $X=\{1,2,3,4,5\}$ and edge lengths $\alpha_{e_1} = 2, \, \alpha_{e_2}=3, \ldots, \alpha_{I_1}=4$ and $\alpha_{I_2}=2$. %
The vector of edge weights described in Definition \ref{shapleymatrix} is thus $(2,3,1,2,1,4,2)$.
	}
	\label{Fig1}
\end{figure}

\section{The Shapley Transformation Matrix}
Following the notation of \citet{Haake2008}, we refer to the weights of edges incident to leaves as \emph{leaf weights} and to the weights of internal edges as \emph{internal edge weights}. 
Recall that an unrooted binary phylogenetic tree on $n$ taxa has precisely $2n-3$ edges, whereof $n-3$ edges are internal edges (cf. \citet[p. 10]{Book_Steel}).
Then we can define the \emph{Shapley transformation matrix} as follows (taken from \citet{Haake2008}).

\begin{definition}[Shapley transformation matrix] \label{shapleymatrix}
Let $\mathcal{T}$ be a phylogenetic $X$-tree with leaf set $X=\{1, \ldots, n\}$, associated leaf weights $\alpha_1, \ldots, \alpha_n$ and internal edges $I_1, \ldots, I_{n-3}$ with associated internal edge weights $\alpha_{I_1}, \ldots, \alpha_{I_{n-3}}$. Let $\overrightarrow{E}$ be a vector consisting of the edge weights in this order: $(\alpha_1, \ldots, \alpha_n, \alpha_{I_1}, \ldots, \alpha_{I_{n-3}})^{\top}$. Then we define $\mathbf{M} = \mathbf{M}(X, \, PD_{\mathcal{T}})$ to be the $n \times (2n-3)$ matrix that corresponds to Equation \eqref{eq_sv} and therefore represents the Shapley transformation, such that the Shapley value of the game $(X, \, PD_{\mathcal{T}})$ is
$$ SV(X, PD_{\mathcal{T}}) = (SV_1, SV_2, \ldots, SV_n)^{\top} = \mathbf{M}\overrightarrow{E},$$
where $SV_i$ is the Shapley value of leaf $i$. 
The rows of $\mathbf{M}$ correspond to the leaves of the tree and the columns correspond to its edges.
\end{definition}

Note that the Shapley transformation matrix $\mathbf{M}$ depends on the tree topology. To be more precise, it was shown in \citet{Haake2008} that $\mathbf{M}$ depends on the so-called \emph{split counts} of a tree. 

\begin{definition}[Split counts]
Let $\mathcal{T}$ be a phylogenetic $X$-tree with leaf set $X=\{1, \ldots, n\}$ and edge set $E$. For a leaf $i \in X$ and an edge $k \in E$ the removal of $k$ splits $\mathcal{T}$ into two subtrees. Let $\mathcal{C}(i,k)$ denote the set of leaves in the subtree that contains $i$ (the \enquote{containing} subtree) and let $\mathcal{F}(i,k)$ denote the set of leaves in the other subtree that is \enquote{far} from $i$. We set $c(i,k) \coloneqq \vert \mathcal{C}(i,k) \vert$ and $f(i,k) \coloneqq \vert \mathcal{F}(i,k) \vert$ and call $c(i,k)$ and $f(i,k)$ the \emph{split counts} associated with leaf $i$ and edge $k$. Note that $c(i,k)+f(i,k)=n$ for all $i \in X$.
\end{definition}

\begin{example}
Consider leaf $3$ and edge $I_1$ of the phylogenetic tree $\mathcal{T}$ depicted in Figure \ref{Fig1}. Then $\mathcal{C}(3,I_1) = \{3,4,5\}$ and $\mathcal{F}(3, I_1)=\{1,2\}$. Thus, $c(3, I_1) = 3$ and $f(3, I_1) = 2$.
\end{example}

Based on the split counts of a phylogenetic tree the entries of the Shapley transformation matrix can be calculated as follows:
\begin{theorem}[\citet{Haake2008}] \label{theorem1}
Let $\mathcal{T}$ be a phylogenetic tree with $n$ leaves. Then the $(i,k)$th entry of the Shapley transformation matrix $\mathbf{M}$ is given by 
\begin{equation}
\mathbf{M}[i,k] = \frac{f(i,k)}{n \, c(i,k)}.
\end{equation}
\end{theorem}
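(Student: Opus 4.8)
The plan is to exploit the linearity of $SV_i$ in the edge weights, noted just before Definition~\ref{shapleymatrix}, and to read off the coefficient of a single edge weight $\alpha_k$ directly from Equation~\eqref{eq_sv}. Since $PD(S)$ is the total length of the minimal subtree of $\mathcal{T}$ spanning $S$, which is a union of edges, we can write $PD(S)=\sum_{k\in E}\alpha_k\,\chi_k(S)$, where $\chi_k(S)\in\{0,1\}$ indicates whether edge $k$ lies in that minimal subtree. Substituting this into \eqref{eq_sv} and matching it against $SV=\mathbf{M}\overrightarrow{E}$ gives
\[
\mathbf{M}[i,k]=\frac{1}{n!}\sum_{\substack{S\subseteq X\\ i\in S}}(|S|-1)!\,(n-|S|)!\,\bigl(\chi_k(S)-\chi_k(S\setminus\{i\})\bigr),
\]
so the problem reduces to understanding the increments $\chi_k(S)-\chi_k(S\setminus\{i\})$ and then evaluating a purely combinatorial sum.

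Next I would pin down when this increment is nonzero. The edge $k$ lies in the minimal subtree spanning a leaf set $A$ if and only if $A$ meets both components of $\mathcal{T}\setminus k$, i.e.\ $A\cap\mathcal{C}(i,k)\neq\emptyset$ and $A\cap\mathcal{F}(i,k)\neq\emptyset$. Since $PD$ is monotone with respect to inclusion, $\chi_k(S\setminus\{i\})\le\chi_k(S)$, so the increment is always $0$ or $1$. Because $i\in\mathcal{C}(i,k)$, deleting $i$ leaves the $\mathcal{F}$-side untouched, so the increment equals $1$ exactly when $S\cap\mathcal{F}(i,k)\neq\emptyset$ and $S\cap\mathcal{C}(i,k)=\{i\}$. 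Hence only coalitions consisting of $i$ together with a nonempty subset of $\mathcal{F}(i,k)$ contribute; writing $j=|S\cap\mathcal{F}(i,k)|=|S|-1$, $f=f(i,k)$ and $c=c(i,k)=n-f$, there are $\binom{f}{j}$ such coalitions of size $j+1$, so
\[
\mathbf{M}[i,k]=\frac{1}{n!}\sum_{j=1}^{f}\binom{f}{j}\,j!\,(n-j-1)!.
\]

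It then remains to put this sum in closed form. Using $\binom{f}{j}j!=f!/(f-j)!$ and the index shift $m=f-j$ turns it into $\tfrac{f!}{n!}\sum_{m=0}^{f-1}\tfrac{(c+m-1)!}{m!}=\tfrac{f!\,(c-1)!}{n!}\sum_{m=0}^{f-1}\binom{c+m-1}{m}$; the hockey-stick identity collapses the remaining sum to $\binom{n-1}{f-1}$, and a short cancellation yields $\mathbf{M}[i,k]=\dfrac{f}{n\,c}$, as claimed. The main obstacle is precisely this last binomial identity — everything before it is bookkeeping about which edges enter a spanning subtree. A slicker route that avoids the algebra entirely uses the random-ordering form of the Shapley value: edge $k$ adds its length to $i$'s marginal gain in a uniformly random permutation iff $i$ precedes every other leaf of $\mathcal{C}(i,k)$ (probability $1/c$) and is preceded by at least one leaf of $\mathcal{F}(i,k)$, which, conditioned on the first event, has probability $1-\tfrac{c}{n}=\tfrac{f}{n}$; multiplying gives $\mathbf{M}[i,k]=\tfrac1c\cdot\tfrac fn$ at once. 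I would present the elementary computation from \eqref{eq_sv} and record the probabilistic shortcut as a remark.
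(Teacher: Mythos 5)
The paper states this result as a citation of \citet{Haake2008} and gives no proof of its own, so there is nothing internal to compare against; judged on its own terms, your argument is correct and complete. The decomposition $PD(S)=\sum_k\alpha_k\chi_k(S)$, the identification of the contributing coalitions as $\{i\}\cup F'$ with $\emptyset\neq F'\subseteq\mathcal{F}(i,k)$, the hockey-stick evaluation of the resulting sum to $\frac{f}{nc}$, and the probabilistic shortcut via random orderings all check out; this is essentially the standard derivation from the cited source.
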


The split counts can also be used to compute a basis for the null space of $\mathbf{M}$.
\begin{theorem}[\citet{Haake2008}]
Let $\mathcal{T}$ be a phylogenetic tree with leave set $X = \{1, \ldots, n\}$ and internal edges $I_1, \ldots, I_{n-3}$. The dimension of the null space of $\mathbf{M}=\mathbf{M}(X, PD_{\mathcal{T}})$ is $n-3$. 
A basis for the null space is the collection of vectors $\{w_{I_k}\}$ in $\mathbb{R}^{2n-3}$, one for each internal edge $I_k$:
	\begin{equation}
		\left(w_{I_k} \right)_i = \begin{cases}
			-  \frac{f(i,k)-1}{(n-2) \, c(i,k)}, & \text{ if } 1 \leq i \leq n\\
			1, & \text{ if } i=n+k \\
			0, & \text{ otherwise. }
		\end{cases}
	\end{equation}
\end{theorem}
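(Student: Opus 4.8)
The plan is to verify directly that each proposed vector $w_{I_k}$ lies in the null space of $\mathbf{M}$, that the collection $\{w_{I_k}\}_{k=1}^{n-3}$ is linearly independent, and that the dimension count is tight. The linear independence is essentially immediate: the vector $w_{I_k}$ has a $1$ in coordinate $n+k$ and a $0$ in every other coordinate indexed by an internal edge, so the submatrix of the $w_{I_k}$ restricted to the last $n-3$ coordinates is the identity; hence the $w_{I_k}$ are linearly independent, and the null space has dimension at least $n-3$. For the reverse inequality, I would argue that $\mathbf{M}$ has rank $n$: by Theorem \ref{theorem1} the entry $\mathbf{M}[i,k]$ for a leaf edge $k$ incident to leaf $j$ satisfies $c(i,k)=n-1$ if $i\neq j$ (so that entry is $\tfrac{1}{n(n-1)}$) and $c(j,k)=1$, $f(j,k)=n-1$ (so that entry is $\tfrac{n-1}{n}$); thus the leaf-edge columns of $\mathbf{M}$ form the matrix $\tfrac{1}{n(n-1)}\big(J + (n^2-2n)\,\mathrm{Id}\big)$, where $J$ is the all-ones $n\times n$ matrix, which is nonsingular for $n\geq 2$. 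So $\operatorname{rank}(\mathbf{M})=n$ and $\dim\ker(\mathbf{M}) = (2n-3)-n = n-3$, matching the size of our independent set.

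The substantive step is showing $\mathbf{M}\,w_{I_k}=0$ for each fixed internal edge $k$; that is, for every leaf $i$,
\begin{equation*}
\sum_{\ell=1}^{2n-3} \mathbf{M}[i,\ell]\,(w_{I_k})_\ell
= -\sum_{\ell=1}^{n} \frac{f(i,\ell)}{n\,c(i,\ell)}\cdot\frac{f(i,\ell)-1}{(n-2)\,c(i,\ell)}
\;+\; \frac{f(i,k)}{n\,c(i,k)} \;\overset{?}{=}\; 0,
\end{equation*}
where the sum over $\ell$ runs over all leaf edges (the other internal-edge columns contribute $0$ because $(w_{I_k})_{n+j}=0$ for $j\neq k$). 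Equivalently, one must show
\begin{equation*}
\sum_{\ell=1}^{n} \frac{f(i,\ell)\,(f(i,\ell)-1)}{c(i,\ell)^2} \;=\; (n-2)\,\frac{f(i,k)}{c(i,k)}.
\end{equation*}
The natural approach is a combinatorial bookkeeping argument on the tree. Fixing the leaf $i$ as a reference point, orient every edge toward $i$; then for a leaf edge $\ell$ incident to leaf $j$ we have $f(i,\ell)=1$ unless $j=i$, in which case $f(i,i)=n-1$ and $c(i,i)=1$. So every term in the left-hand sum with $j\neq i$ vanishes (since $f(i,\ell)-1=0$), and the sum collapses to the single term $\ell=i$, giving $\tfrac{(n-1)(n-2)}{1}$. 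On the right-hand side we likewise need to understand $f(i,k)/c(i,k)$ for the internal edge $k$ — but wait, this shows the identity cannot hold as I wrote it, so the correct reduction must instead pair the leaf-edge sum against the internal column differently.

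Let me restate the genuinely hard part: the identity to be proved is not a single scalar equation but must hold simultaneously for all $n$ leaves $i$, and the correct manipulation is to recognize $\mathbf{M}\,w_{I_k}$ as computing a Shapley-value contribution that telescopes. The cleanest route, which I would pursue, is to avoid the raw entrywise sum entirely: since $\mathbf{M}$ represents the Shapley transformation, $w_{I_k}$ is a null vector iff the edge-length assignment $\overrightarrow{E}=w_{I_k}$ yields the zero Shapley value, i.e. iff redistributing lengths according to $w_{I_k}$ changes no species' average contribution to $PD$. One then interprets $w_{I_k}$ as: put length $+1$ on internal edge $I_k$ and length $-\tfrac{f(i,k)-1}{(n-2)c(i,k)}$ on the pendant edge of leaf $i$, and checks via the additivity of $PD$ over edges and the known closed form (Equation \eqref{eq_sv}, or the fact that a single edge $k$ of length $1$ contributes $\tfrac{f(i,k)}{n\,c(i,k)}$ to $SV_i$) that the contributions cancel leaf-by-leaf. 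The main obstacle is getting the split-count arithmetic right when $i$ lies on the "far" side of $I_k$ versus the "containing" side, so I would split into those two cases and use $c(i,k)+f(i,k)=n$ together with the relation between the split counts at $I_k$ and at the pendant edges in each subtree to close the computation.
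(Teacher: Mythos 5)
The paper itself offers no proof of this theorem (it is quoted from Haake et al.\ 2008), so your attempt can only be judged on its own merits. The easy parts are fine: linear independence of the $w_{I_k}$ from the identity block in the internal-edge coordinates, and the rank count via the leaf-edge columns $\frac{1}{n(n-1)}\bigl(J+(n^2-2n)\,\mathrm{Id}\bigr)$, are both correct (nonsingularity needs $n\geq 3$, not $n\geq 2$, but that is irrelevant since internal edges require $n\geq 4$).

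The genuine gap is that the substantive step, $\mathbf{M}\,w_{I_k}=0$, is never carried out, and your first attempt at it misreads the vector. In the statement, the leaf-edge coordinate $(w_{I_k})_\ell$ for $1\leq\ell\leq n$ equals $-\frac{f(\ell,I_k)-1}{(n-2)\,c(\ell,I_k)}$, i.e.\ it involves the split counts of \emph{leaf $\ell$ relative to the internal edge $I_k$}; you instead wrote $-\frac{f(i,\ell)-1}{(n-2)\,c(i,\ell)}$, the split counts of the fixed leaf $i$ relative to the \emph{leaf edge} $\ell$. That substitution is what makes your sum collapse to the single term $\ell=i$ and produces the false identity you then (correctly) flag as impossible. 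Your fallback paragraph only restates the claim to be proved ("the contributions cancel leaf-by-leaf") without doing the cancellation. With the correct indexing the verification does close in a few lines: fix $i$ and write $c=c(i,I_k)$, $f=f(i,I_k)$. Row $i$ of $\mathbf{M}$ has leaf-edge entries $\frac{n-1}{n}$ in column $i$ and $\frac{1}{n(n-1)}$ in the other $n-1$ leaf-edge columns, while the leaf-edge entries of $w_{I_k}$ are $-\frac{f-1}{(n-2)c}$ for the $c$ leaves on $i$'s side of $I_k$ and $-\frac{c-1}{(n-2)f}$ for the $f$ leaves on the far side. Hence
\begin{equation*}
(\mathbf{M}\,w_{I_k})_i=\frac{f}{nc}-\frac{n-1}{n}\cdot\frac{f-1}{(n-2)c}-\frac{1}{n(n-1)}\left[(c-1)\,\frac{f-1}{(n-2)c}+f\cdot\frac{c-1}{(n-2)f}\right],
\end{equation*}
which simplifies to $\frac{1}{nc}\bigl(f-\frac{(c-1)+(n-1)(f-1)}{n-2}\bigr)=0$ precisely because $c+f=n$. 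This two-case bookkeeping (same side versus far side of $I_k$) is exactly the computation your sketch defers; without it the proof is incomplete.
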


\begin{example}
Consider the phylogenetic tree $\mathcal{T}$ on $n=5$ leaves depicted in Figure \ref{Fig1}. The Shapley transformation matrix for $\mathcal{T}$ is
$$ \mathbf{M}_{\mathcal{T}} = \bordermatrix{ & e_1 & e_2 & e_3 & e_4 & e_5 & I_1 & I_2 \cr \\[6pt]
1 & \nicefrac{4}{5} & \nicefrac{1}{20} & \nicefrac{1}{20} & \nicefrac{1}{20} & \nicefrac{1}{20} & \nicefrac{3}{10} & \nicefrac{2}{15} \cr \\[6pt]
2 & \nicefrac{1}{20} & \nicefrac{4}{5} & \nicefrac{1}{20} & \nicefrac{1}{20} & \nicefrac{1}{20} & \nicefrac{3}{10} & \nicefrac{2}{15} \cr \\[6pt]
3 & \nicefrac{1}{20} & \nicefrac{1}{20} & \nicefrac{4}{5} & \nicefrac{1}{20} & \nicefrac{1}{20} & \nicefrac{2}{15} & \nicefrac{2}{15} \cr \\ [6pt]
4 & \nicefrac{1}{20} & \nicefrac{1}{20} & \nicefrac{1}{20} & \nicefrac{4}{5} & \nicefrac{1}{20} & \nicefrac{2}{15} & \nicefrac{3}{10} \cr \\[6pt]
5 & \nicefrac{1}{20} & \nicefrac{1}{20} & \nicefrac{1}{20} & \nicefrac{1}{20} & \nicefrac{4}{5} & \nicefrac{2}{15} & \nicefrac{3}{10}} $$
and the Shapley value for the game $(X, PD_{\mathcal{T}})$ calculates as
$$ SV(X, PD_{\mathcal{T}})  = \mathbf{M} \cdot \overrightarrow{E}^{\top} = \mathbf{M} \cdot (2,3,1,2,1,4,2)^{\top} = \left( \nicefrac{41}{12}, \nicefrac{25}{6}, 2, \nicefrac{37}{12}, \nicefrac{7}{3} \right)^{\top}.$$
A basis for the null space is
$$ \left\lbrace \begin{pmatrix}
- \nicefrac{1}{3} \\ - \nicefrac{1}{3} \\ - \nicefrac{1}{9} \\ - \nicefrac{1}{9} \\ - \nicefrac{1}{9} \\ 1 \\ 0
\end{pmatrix},
\begin{pmatrix}
- \nicefrac{1}{9} \\ - \nicefrac{1}{9} \\ - \nicefrac{1}{9} \\ - \nicefrac{1}{3}  \\ - \nicefrac{1}{3} \\ 0 \\ 1
\end{pmatrix}
 \right\rbrace.$$  \\
\end{example}

\noindent Moreover, following \citet{Haake2008} we call two trees \emph{isomorphic} if there is a bijection between the edges that maps one tree to the other and preserves the topological structure of the tree. 
Note that here we are only taking into account the treeshape or topology and not the labeling of leaves, i.e., we for example regard $\mathcal{T}$ and $\mathcal{T}'$ depicted in Figure \ref{Fig2} as isomorphic, because they have the same topology. Still, they depict different evolutionary relationships between the species $1,2,3$ and $4$. \\

\begin{figure}
	\centering
	\includegraphics[scale=0.9]{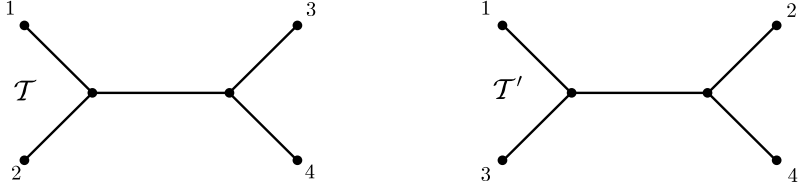}
	\caption{Phylogenetic $X$-trees $\mathcal{T}$ and $\mathcal{T}'$ on $X = \{1,2,3,4\}$ that are regarded as \emph{isomorphic}, because they have the same topology.}
	\label{Fig2}
\end{figure}

\noindent We call two matrices \emph{permutation-equivalent} if they only differ by a permutation of the rows and a permutation of the columns, i.e., two matrices $M_1$ and $M_2 \, \in \mathbb{R}^{m \times n}$ are permutation-equivalent if there exists a permutation matrix $P \in \mathbb{R}^{m \times m}$ and a permutation matrix $Q \in \mathbb{R}^{n \times n}$ such that
$$ P \, M_1 \, Q = M_2.$$
Similarly, we call two subspaces of $\mathbb{R}^n$ \emph{permutation-equivalent} if one space can be obtained from the other by some permutation of the coordinates. Based on this notation we can restate the following theorem from \citet{Haake2008}.

\begin{theorem}[\citet{Haake2008}] \label{theorem3}
Isomorphic trees induce permutation-equivalent Shapley transformation matrices with permutation-equivalent null spaces. Hence, if for two trees $\mathcal{T}_1, \mathcal{T}_2$, their Shapley transformation matrices $\mathbf{M}_1, \mathbf{M}_2$ or their null spaces are not permutation-equivalent, then $\mathcal{T}_1, \mathcal{T}_2$ must not be isomorphic.
\end{theorem}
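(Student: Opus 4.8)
The plan is to push everything through Theorem~\ref{theorem1} and the displayed null space formula, both of which express every entry of $\mathbf{M}$ and every basis vector $w_{I_k}$ solely in terms of the split counts $c(i,k)$ and $f(i,k)$. Thus it will suffice to show that an isomorphism of trees transports split counts to split counts. First I would make precise that an edge bijection $\psi\colon E(T_1)\to E(T_2)$ which preserves the topological structure induces a bijection $\sigma\colon X_1\to X_2$ of the leaf sets: since $\psi$ respects vertex incidences, it sends degree-$1$ vertices to degree-$1$ vertices, hence pendant edges to pendant edges, and restricting the induced vertex map to the leaves yields $\sigma$. In particular $T_1$ and $T_2$ have the same number $n$ of leaves, and $\psi$ maps the block of leaf edges onto the block of leaf edges and the block of internal edges onto the block of internal edges, so the induced column permutation respects the edge ordering fixed in Definition~\ref{shapleymatrix}.

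Next I would establish the key invariance: $c(i,k)=c(\sigma(i),\psi(k))$ and $f(i,k)=f(\sigma(i),\psi(k))$ for all leaves $i$ and edges $k$ of $\mathcal{T}_1$. This is essentially immediate from the definition of split counts, since deleting $k$ from $T_1$ and deleting $\psi(k)$ from $T_2$ produce pairs of components that are matched up by $\psi$ and its induced vertex map; hence $\mathcal{C}(i,k)$ and $\mathcal{F}(i,k)$ are carried bijectively onto $\mathcal{C}(\sigma(i),\psi(k))$ and $\mathcal{F}(\sigma(i),\psi(k))$, so their cardinalities agree. Combining this with Theorem~\ref{theorem1} yields $\mathbf{M}_1[i,k]=\mathbf{M}_2[\sigma(i),\psi(k)]$ for all $i,k$; taking $P$ and $Q$ to be the permutation matrices realizing $\sigma$ on the rows and $\psi$ on the columns, this is exactly $P\,\mathbf{M}_1\,Q=\mathbf{M}_2$, so $\mathbf{M}_1$ and $\mathbf{M}_2$ are permutation-equivalent.

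For the null spaces there are two routes. Abstractly, from $P\,\mathbf{M}_1\,Q=\mathbf{M}_2$ with $P$ invertible one gets $\mathbf{M}_2 x=0 \iff \mathbf{M}_1(Qx)=0$, so $\mathrm{null}(\mathbf{M}_2)=Q^{-1}\,\mathrm{null}(\mathbf{M}_1)$, i.e.\ the two null spaces differ only by the coordinate permutation induced by $\psi$. Alternatively, and closer to \citet{Haake2008}, one checks directly from the split-count invariance (and from the fact that $\psi$ matches internal edges with internal edges, so that the entry equal to $1$ lands in the right coordinate) that $\psi$ carries the basis vector $w_{I_k}$ of $\mathrm{null}(\mathbf{M}_1)$ to the basis vector $w_{\psi(I_k)}$ of $\mathrm{null}(\mathbf{M}_2)$. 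Either way the null spaces are permutation-equivalent, and the final ``Hence'' clause is then just the contrapositive of what has been shown.

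The one genuinely delicate point — the step I would be most careful with — is fixing the meaning of ``preserving the topological structure of the tree'' so that an edge bijection really does induce a well-defined vertex, and hence leaf, bijection with the stated incidence-preservation; once that is nailed down, the remainder is bookkeeping driven entirely by Theorem~\ref{theorem1} and the null space formula.
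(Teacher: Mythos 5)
Your proposal is correct and follows essentially the same route as the paper, which only sketches the argument: isomorphism preserves split counts, hence by Theorem~\ref{theorem1} the matrices agree up to the row permutation induced on leaves and the column permutation induced on edges, and the null spaces then differ only by the coordinate permutation coming from the columns. Your write-up merely makes explicit two points the paper leaves implicit — that the edge bijection induces a leaf bijection respecting the leaf-edge/internal-edge block structure, and the linear-algebra step $\mathrm{null}(\mathbf{M}_2)=Q^{-1}\,\mathrm{null}(\mathbf{M}_1)$ — both of which are sound.
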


Theorem \ref{theorem3} follows from the fact that the split counts of a tree only depend on the topological structure of the tree. To be precise, isomorphic trees induce the same Shapley transformation matrix $\mathbf{M}$ up to a permutation of the rows (given by permuting the order of the leaves that define the rows) and a permutation of the columns (given by permuting the order of the edges that define  the columns). 
Note that this means that isomorphic trees also induce permutation-equivalent null spaces, because while the null space of $\mathbf{M}$ is not affected by permuting the rows of $\mathbf{M}$, a permutation of the columns of $\mathbf{M}$ induces a permutation of the coordinates of the null space of $\mathbf{M}$.
\\

In their paper, \citet{Haake2008} raise two questions concerning the relationship between the split counts of a tree, its topology and the Shapley transformation matrix, namely:

\begin{enumerate}
	\item Is there a way to determine or estimate split counts from data, and can this assist in determining the correct tree topology?
	\item Does the converse of Theorem \ref{theorem3} hold, i.e., if two trees have permutation-equivalent Shapley transformation matrices or permutation-equivalent null spaces, are they isomorphic?
\end{enumerate}

In the following, we present our main result. We show that there are non-isomorphic trees, i.e., trees of different topology, that induce permutation-equivalent Shapley transformation matrices and permutation-equivalent null spaces. This implies that we can negate the second question. We then also negate the second part of the first question, because our results show that split counts are not sufficient to determine the topology of a tree.

\begin{theorem} \label{observation1}
Two trees $\mathcal{T}_1, \mathcal{T}_2$ with permutation-equivalent Shapley transformation matrices or permutation-equivalent null spaces are not necessarily isomorphic.
\end{theorem}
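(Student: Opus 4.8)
The plan is to prove Theorem~\ref{observation1} by exhibiting one explicit pair of non-isomorphic trees $\mathcal{T}_1,\mathcal{T}_2$ whose Shapley transformation matrices are permutation-equivalent. A single such pair settles both clauses at once: if $P\mathbf{M}_1 Q=\mathbf{M}_2$ then $\ker\mathbf{M}_2=Q^{-1}\ker\mathbf{M}_1$, so the null spaces of $\mathbf{M}_1$ and $\mathbf{M}_2$ are permutation-equivalent as well.

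The construction is guided by the following observation about Theorem~\ref{theorem1}. Since $\mathbf{M}[i,k]=\tfrac{f(i,k)}{n\,c(i,k)}$ and $c(i,k)+f(i,k)=n$, the column of $\mathbf{M}$ indexed by an internal edge $k$ whose removal induces the split $\{A,B\}$ takes the value $\tfrac{|B|}{n|A|}$ on the rows of $A$ and $\tfrac{|A|}{n|B|}$ on the rows of $B$; these two values coincide --- and the whole column equals $\tfrac1n\mathbf 1$ --- exactly when $|A|=|B|=\tfrac n2$, whereas for every unbalanced split the column still encodes the bipartition. Hence two trees have permutation-equivalent Shapley matrices as soon as some relabeling of the leaves carries the \emph{unbalanced} splits of one onto those of the other and the two trees have equally many balanced $\tfrac n2{:}\tfrac n2$ splits. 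The idea is therefore to build two trees that agree on all unbalanced splits but differ in a single balanced split, which is invisible to $\mathbf{M}$.

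Concretely I would take $n=16$, partition $X$ into four $4$-element blocks $A_1,A_2,A_3,A_4$, and fix on them binary subtrees --- caterpillars on $A_1$ and $A_2$ and two-cherry trees (the balanced rooted shape on four leaves) on $A_3$ and $A_4$. Let $\mathcal{T}_1$ attach the $A_1$- and $A_2$-subtrees to one end of a central edge and the $A_3$- and $A_4$-subtrees to its other end, and let $\mathcal{T}_2$ attach the $A_1$- and $A_3$-subtrees to one end and the $A_2$- and $A_4$-subtrees to the other. Every edge of $\mathcal{T}_1$ or $\mathcal{T}_2$ other than the central one induces a split $\{C,\,X\setminus C\}$ whose smaller side $C$ lies inside a single block $A_i$ (a leaf, a subset coming from an edge inside the fixed $A_i$-subtree, or $A_i$ itself), and this collection of splits depends only on the four fixed subtrees, hence is identical for $\mathcal{T}_1$ and $\mathcal{T}_2$. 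Their only balanced splits are $\{A_1\cup A_2,\,A_3\cup A_4\}$ and $\{A_1\cup A_3,\,A_2\cup A_4\}$ respectively, each contributing the column $\tfrac1{16}\mathbf 1$. By Theorem~\ref{theorem1} this means $\mathbf{M}_1$ and $\mathbf{M}_2$ have the same multiset of columns with rows indexed identically, so they are permutation-equivalent. On the other hand $\mathcal{T}_1$ has a side of its central edge carrying two caterpillars, which $\mathcal{T}_2$ lacks, so the two trees are not isomorphic.

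The step I expect to need the most care is the non-isomorphism claim, and for it I would first record that a tree on $2m$ leaves has at most one $m{:}m$ split (two such splits $\{A,B\},\{C,D\}$ are pairwise compatible, so after relabeling $A\subseteq D$; but $|A|=|D|=m$ forces $A=D$, so the splits coincide). Thus any isomorphism $\mathcal{T}_1\to\mathcal{T}_2$ must match their unique balanced splits, and hence the unordered pair of rooted subtree-shapes hanging on the two sides of that split; for $\mathcal{T}_1$ this pair is $\{\{\text{cat},\text{cat}\},\{\text{bal},\text{bal}\}\}$ and for $\mathcal{T}_2$ it is $\{\{\text{cat},\text{bal}\},\{\text{cat},\text{bal}\}\}$, which differ. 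It is also worth sketching why a balanced split is genuinely indispensable here (which in turn forces $n$ to be fairly large): if neither tree had a balanced split, every column of $\mathbf{M}$ would determine its split as a bipartition, so permutation-equivalence of the matrices would yield a leaf bijection carrying $\mathrm{Splits}(\mathcal{T}_1)$ onto $\mathrm{Splits}(\mathcal{T}_2)$, whence $\mathcal{T}_1\cong\mathcal{T}_2$ by the splits-equivalence theorem.
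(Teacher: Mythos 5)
Your proposal is correct, and your witness pair is in fact exactly the paper's: your four-block construction with caterpillars on $A_1,A_2$ and balanced subtrees on $A_3,A_4$, joined as $\{A_1,A_2\}\,\vert\,\{A_3,A_4\}$ versus $\{A_1,A_3\}\,\vert\,\{A_2,A_4\}$, reproduces the $16$-leaf pair of Figure~\ref{Fig3} (equivalently the specialization of Figure~\ref{Fig4} noted in the paper's remark). The paper establishes permutation-equivalence by checking case by case that $c(i,k)$ and $f(i,k)$ agree for every leaf--edge pair, including at the central edge where the two splits differ as bipartitions but have matching block sizes; you reach the same conclusion more conceptually, by observing that every non-central column is literally identical in the two matrices (the split lies inside a fixed block subtree) and that the single differing column belongs to a balanced $\tfrac n2{:}\tfrac n2$ split and is therefore the constant vector $\tfrac1n\mathbf 1$ in both trees. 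You also supply two things the paper leaves implicit: a rigorous non-isomorphism argument (uniqueness of the balanced split via pairwise compatibility, then comparison of the rooted shapes on its two sides), where the paper only asserts the trees ``clearly'' have different topologies, and an explanation of why a balanced split is indispensable (otherwise the columns of $\mathbf M$ determine the split system and the splits-equivalence theorem would force isomorphism), which nicely accounts for why no smaller example of this kind exists. Both additions are sound and strengthen the argument without changing its substance.
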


\begin{proof}
Consider the two trees $\mathcal{T}_1$ and $\mathcal{T}_2$ depicted in Figure \ref{Fig3}. $\mathcal{T}_1$ and $\mathcal{T}_2$ are clearly not isomorphic, because they have different topologies. However, for each leaf $i \in \{1, \ldots, 16\}$ and edge $k \in \{e_1, \ldots, e_{16}, I_1, \ldots, I_{13}\}$ both trees exhibit the same split counts. Thus, they induce permutation-equivalent Shapley transformation matrices and permutation-equivalent null spaces. Note that in this case the Shapley transformation matrices and null spaces are not only permutation-equivalent, but in fact identical.
\qed
\end{proof}

\begin{figure} [htbp] 
	\centering
	\includegraphics[scale=0.9]{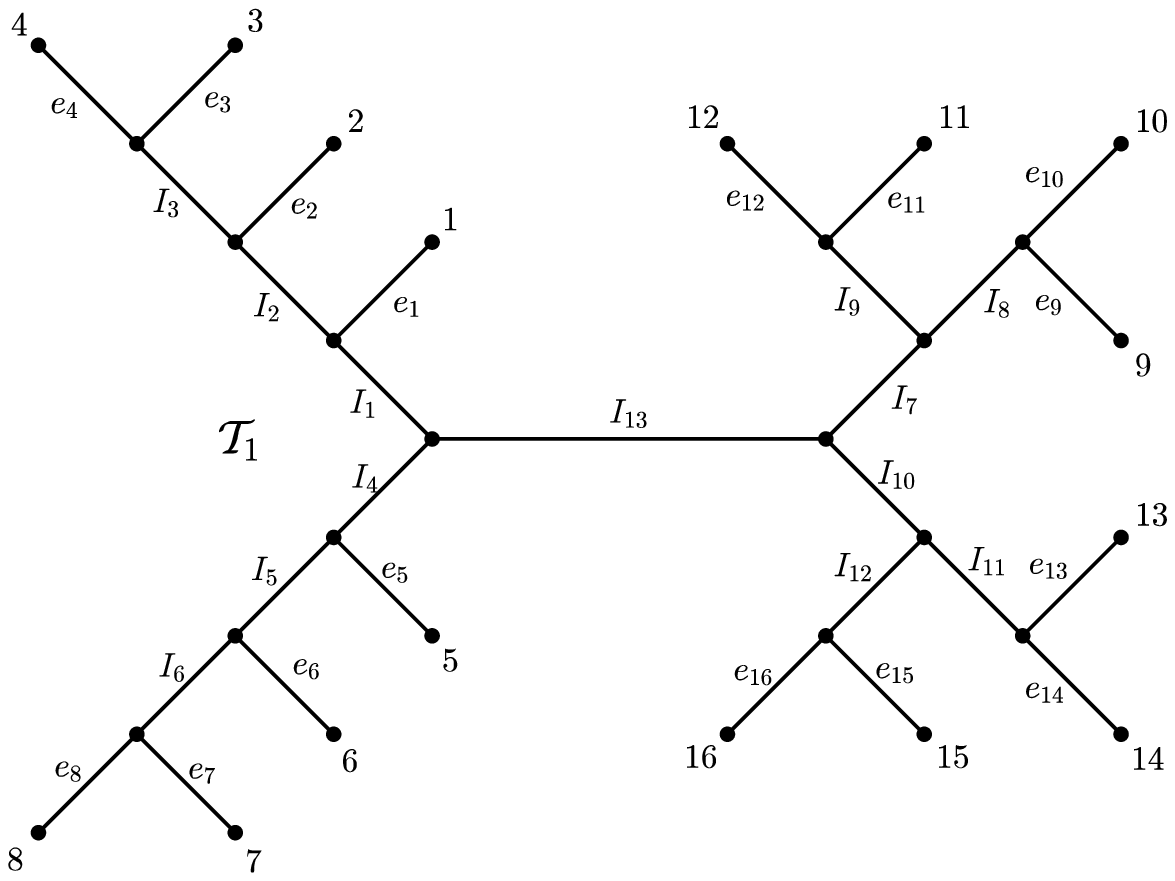} \\
	\includegraphics[scale=0.9]{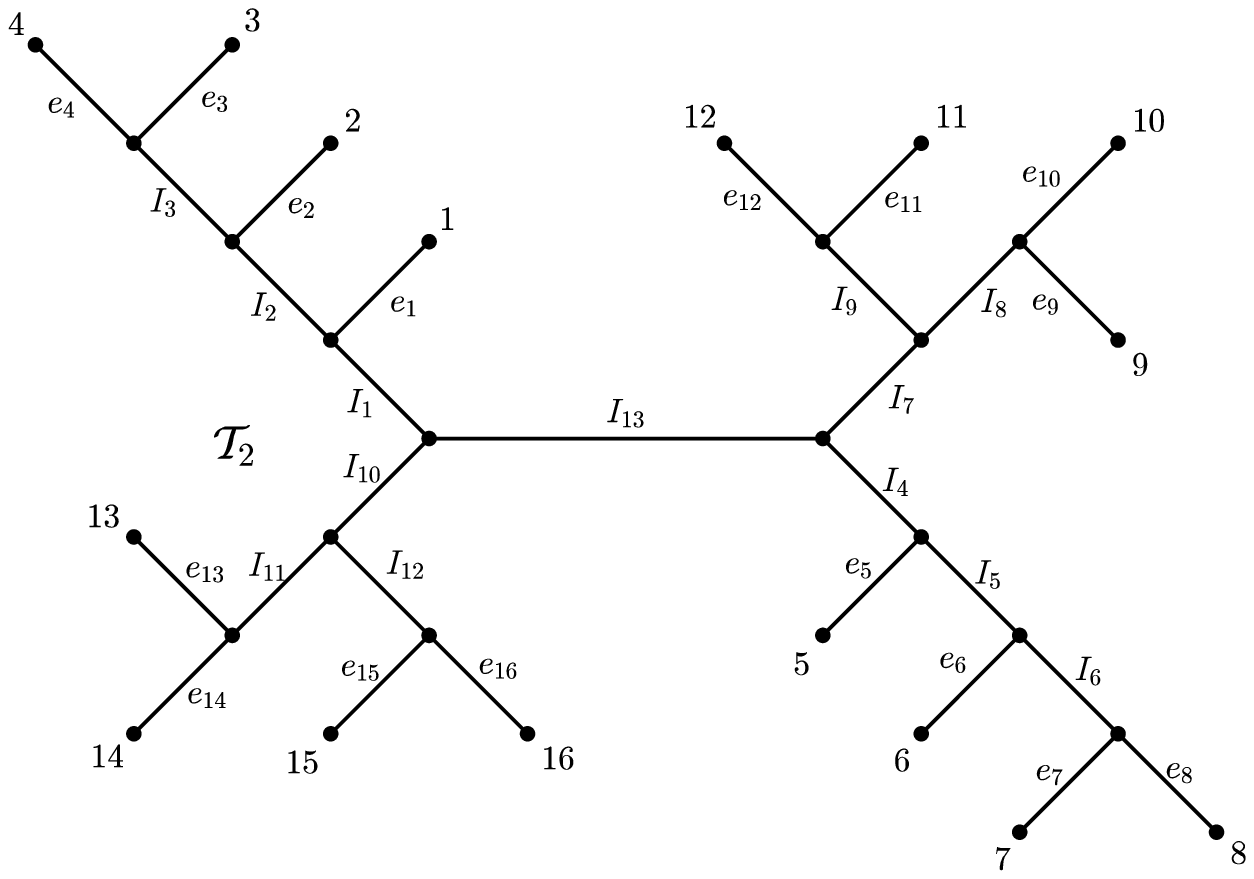}
	\caption{Two non-isomorphic trees $\mathcal{T}_1$ and $\mathcal{T}_2$ on 16 leaves that induce permutation-equivalent Shapley transformation matrices and permutation-equivalent null spaces.}
	\label{Fig3}
\end{figure}

\begin{remark}
The pair $(\mathcal{T}_1, \mathcal{T}_2)$ in Figure \ref{Fig3} is the smallest example for a pair of non-isomorphic trees inducing permutation-equivalent Shapley transformation matrices and permutation-equivalent null spaces, which we verified by an exhaustive search and analysis of all tree topologies on fewer than 16 leaves and their split counts. \\
Note that if we had also taken into account the labeling of leaves and not only the tree topology when defining  isomorphism of trees, i.e., if we had not regarded $\mathcal{T}$ and $\mathcal{T}'$ (Figure \ref{Fig2}) as isomorphic, but as non-isomorphic, then $\mathcal{T}$ and $\mathcal{T}'$ would have been a   smallest example, because clearly they induce permutation-equivalent Shapley transformation matrices and permutation-equivalent null spaces (since they share the same topology).

However, in the following we generalize the pair $(\mathcal{T}_1, \mathcal{T}_2)$ to a class of pairs $(\mathcal{T}_1^{\ast}, \mathcal{T}_2^{\ast})$, where $\mathcal{T}_1^{\ast}$ and $\mathcal{T}_2^{\ast}$ are non-isomorphic trees on $\geq 16$ leaves that induce permutation-equivalent Shapley transformation matrices and permutation-equivalent null spaces.
\end{remark}

\begin{figure}[htbp]
	\centering
	\includegraphics[scale=0.75]{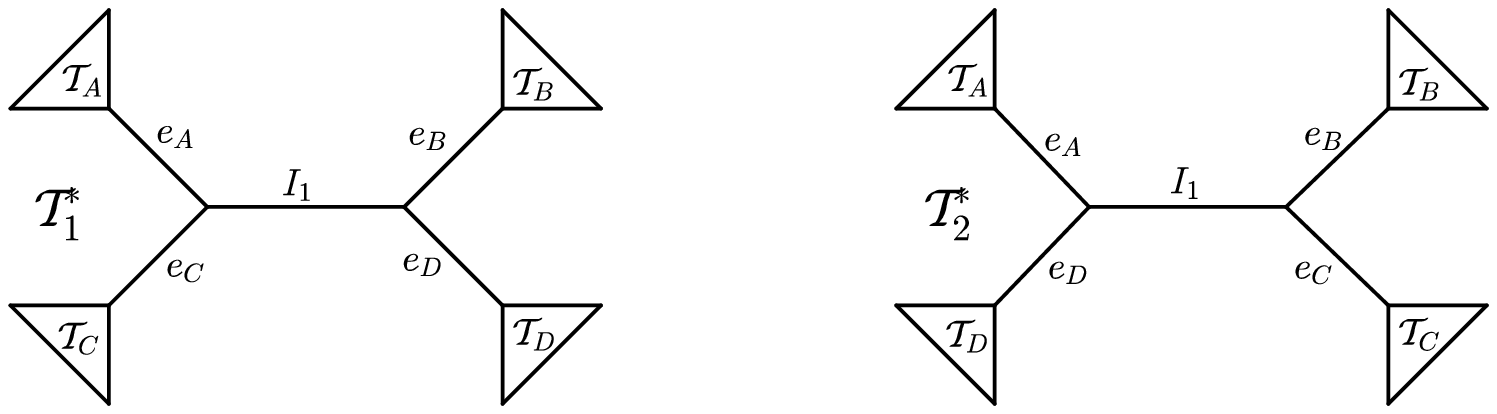}
	\caption{Two non-isomorphic trees $\mathcal{T}_1^{\ast}$ and $\mathcal{T}_2^{\ast}$ consisting of four rooted subtrees, where $\left\vert \mathcal{T}_A \right\vert = \left\vert \mathcal{T}_B \right\vert$, but $\mathcal{T}_A$ and $\mathcal{T}_B$ have different topologies and $\left\vert \mathcal{T}_C \right\vert = \left\vert \mathcal{T}_D \right\vert$, but $\mathcal{T}_C$ and $\mathcal{T}_D$ have different topologies ($ \left\vert \mathcal{T}_A \right\vert, \, \left\vert \mathcal{T}_B \right\vert, \, \left\vert \mathcal{T}_C \right\vert$ and $\left\vert \mathcal{T}_D \right\vert \, \geq 4$).}
	\label{Fig4}
\end{figure}

\begin{theorem} \label{observation2}
Trees of type $\mathcal{T}_1^{\ast}$ and $\mathcal{T}_2^{\ast}$ as in Figure \ref{Fig4} induce permutation-equivalent Shapley transformation matrices and permutation-equivalent null spaces, but are not isomorphic.
\end{theorem}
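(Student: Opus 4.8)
The plan is to reduce everything to split counts, exactly as in the proof of Theorem~\ref{observation1}. By Theorem~\ref{theorem1} the Shapley transformation matrix $\mathbf{M}$ is completely determined, entry by entry, by the split counts $c(i,k)$ and $f(i,k)$, and the null-space basis vectors $w_{I_k}$ of \citet{Haake2008} are likewise built only from the split counts together with the position of the internal edge $I_k$. Hence it suffices to produce a bijection $\psi$ between the leaf sets of $\mathcal{T}_1^{\ast}$ and $\mathcal{T}_2^{\ast}$ and a bijection $\sigma$ between their edge sets such that $c(i,k)=c(\psi(i),\sigma(k))$ and $f(i,k)=f(\psi(i),\sigma(k))$ for all leaves $i$ and edges $k$. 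Given such $\psi,\sigma$, the associated permutation matrices $P,Q$ satisfy $P\,\mathbf{M}_{\mathcal{T}_1^{\ast}}\,Q=\mathbf{M}_{\mathcal{T}_2^{\ast}}$, and $Q$ maps the null space of $\mathbf{M}_{\mathcal{T}_1^{\ast}}$ onto that of $\mathbf{M}_{\mathcal{T}_2^{\ast}}$; in fact, after this relabelling the two matrices and the two null spaces coincide, as asserted.

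Both trees in Figure~\ref{Fig4} are obtained by grafting the same four rooted subtrees $\mathcal{T}_A,\mathcal{T}_B,\mathcal{T}_C,\mathcal{T}_D$ onto a common backbone; the only difference between $\mathcal{T}_1^{\ast}$ and $\mathcal{T}_2^{\ast}$ is which subtree is attached at which attachment point, the rearrangement only interchanging subtrees with equal numbers of leaves (by hypothesis $|\mathcal{T}_A|=|\mathcal{T}_B|$ and $|\mathcal{T}_C|=|\mathcal{T}_D|$). I would therefore let $\psi$ and $\sigma$ act as the identity on the leaves and edges lying \emph{inside} each subtree, identifying the copy of $\mathcal{T}_A$ in $\mathcal{T}_1^{\ast}$ with the copy of $\mathcal{T}_A$ in $\mathcal{T}_2^{\ast}$ (and similarly for $\mathcal{T}_B,\mathcal{T}_C,\mathcal{T}_D$), and let $\sigma$ be the obvious shape-preserving identification of the backbone edges.

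It then remains to check that $\psi$ and $\sigma$ preserve split counts, which I would do in two cases. If $k$ lies inside a subtree, say $\mathcal{T}_A$ (its pendant ``handle'' edge included), then deleting $k$ cuts off a sub-subtree of $\mathcal{T}_A$ with some number $m$ of leaves, so $\{c(i,k),f(i,k)\}=\{m,\,n-m\}$ with the assignment of $c$ versus $f$ determined solely by whether the leaf $i$ lies on the cut-off side; since $m$, the total $n=|\mathcal{T}_A|+|\mathcal{T}_B|+|\mathcal{T}_C|+|\mathcal{T}_D|$, and the side on which $i$ sits are all unchanged by $\psi,\sigma$, the split counts agree. If $k$ is a backbone edge, then deleting $k$ partitions $X$ into two blocks, each a union of entire subtrees; because $|\mathcal{T}_A|=|\mathcal{T}_B|$ and $|\mathcal{T}_C|=|\mathcal{T}_D|$, the corresponding backbone edge $\sigma(k)$ of $\mathcal{T}_2^{\ast}$ partitions $X$ into two blocks of the \emph{same two sizes}, and $\psi(i)$ falls in the block whose size equals that of the block containing $i$; hence $c$ and $f$ are preserved here too. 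This backbone case is where the hypotheses on the leaf numbers genuinely enter, and verifying it attachment point by attachment point against Figure~\ref{Fig4} is the main technical obstacle; the within-subtree case is routine.

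To finish, I would argue $\mathcal{T}_1^{\ast}\not\cong\mathcal{T}_2^{\ast}$: the maximal pendant subtrees hanging off the backbone of $\mathcal{T}_i^{\ast}$ are precisely $\mathcal{T}_A,\mathcal{T}_B,\mathcal{T}_C,\mathcal{T}_D$, and the way their \emph{shapes} are arranged along the backbone is a topological invariant of the tree. Since $\mathcal{T}_A$ and $\mathcal{T}_B$ (respectively $\mathcal{T}_C$ and $\mathcal{T}_D$) have different shapes, the interchange realized in Figure~\ref{Fig4} yields an arrangement that cannot be matched by any topology-preserving edge bijection, so $\mathcal{T}_1^{\ast}$ and $\mathcal{T}_2^{\ast}$ are non-isomorphic even in the label-free sense used throughout the paper. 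Combining this with the split-count computation gives the statement.
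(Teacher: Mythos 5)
Your proposal is correct and takes essentially the same route as the paper's proof: reduce everything to split counts via Theorem~\ref{theorem1}, identify the two trees subtree-by-subtree with the identity bijection, and check the cases edge by edge, with the hypotheses $\vert \mathcal{T}_A \vert = \vert \mathcal{T}_B \vert$ and $\vert \mathcal{T}_C \vert = \vert \mathcal{T}_D \vert$ entering only at the central edge $I_1$, exactly as in the paper. The only difference is organizational: the paper spells out the $I_1$ computation for each of the four possible subtrees containing the leaf $i$ (the verification you deferred as the ``main technical obstacle''), and that check is routine.
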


\begin{remark}
Setting $\mathcal{T}_A, \mathcal{T}_C$ to the so-called rooted caterpillar tree on four leaves and $\mathcal{T}_B, \mathcal{T}_D$ to the fully balanced tree on four leaves, $\mathcal{T}_1^{\ast}$ and $\mathcal{T}_1^{\ast}$ coincide with $\mathcal{T}_1$ and $\mathcal{T}_2$ depicted in Figure \ref{Fig3} and used in the proof of Theorem \ref{observation1}.
\end{remark}

\begin{proof} 
Let $\mathcal{T}_1^{\ast}$ and $\mathcal{T}_2^{\ast}$ be two trees consisting of four rooted subtrees as depicted in Figure \ref{Fig4}, where $S$ denotes the leaf set of $\mathcal{T}_s$ with $s \in \{A,B,C,D\}$ and where 
	\begin{itemize}
	\item $\left\vert A \right\vert = \left\vert B \right\vert, \, $ but $\mathcal{T}_A$ and $\mathcal{T}_B$ are of different shape, 
	\item $\left\vert C \right\vert = \left\vert D \right\vert, \,$ but $\mathcal{T}_C$ and $\mathcal{T}_D$ are of different shape. 
	\end{itemize}
Note that this implies $\left\vert A \right\vert, \, \left\vert B \right\vert, \, \left\vert C \right\vert$ and $\left\vert D \right\vert \geq 4$ (cf. \citet[p. 25]{Semple2003}).
Clearly, $\mathcal{T}_1^{\ast}$ and $\mathcal{T}_2^{\ast}$ are not isomorphic, since both $\mathcal{T}_
	A,\mathcal{T}_B$ and $\mathcal{T}_C, \mathcal{T}_D$ are of different shapes. 
In order to show that $\mathcal{T}_1^{\ast}$ and $\mathcal{T}_2^{\ast}$ induce permutation-equivalent Shapley transformation matrices and permutation-equivalent null spaces we show that they exhibit the same split counts for each leaf $i$ and edge $k$. Here, we distinguish between different cases:
	\begin{itemize}
	\item Leaf $i$ and edge $k$ are part of the same subtree. Let $S$ denote the leaf set of $\mathcal{T}_s$ with $s \in \{A,B,C,D\}$. 
	Exemplarily we assume that $k$ and $i$ are in $\mathcal{T}_A$. Then $k$ induces a split $A_1 \vert A_2$ in $\mathcal{T}_A$ (and thus in $\mathcal{T}_1^{\ast}$ and $\mathcal{T}_2^{\ast}$) with $A_1, A_2 \subset A, \, A_1 \cap A_2 = \emptyset,$ and $A_1 \cup A_2 = A$. Without loss of generality let $A_1$ be the set of leaves of $\mathcal{T}_A$ that is still connected to the rest of $\mathcal{T}_1^{\ast}$, respectively $\mathcal{T}_2^{\ast}$, while $A_2$ is the set that is far from it.
	Now, there are two cases:
	\begin{enumerate}
		\item Leaf $i$ is in $A_1$. Then 
			\begin{align*}
			c(i,k) &= \vert A_1 \vert + \vert B \vert + \vert C \vert + \vert D \vert \\
			f(i,k) &= \vert A_2 \vert.
			\end{align*}
		\item Leaf $i$ is in $A_2$. Then 
			\begin{align*}
			c(i,k) &= \vert A_2 \vert \\
			f(i,k) &= \vert A_1 \vert + \vert B \vert + \vert C \vert + \vert D \vert
			\end{align*}
    \end{enumerate}	
However, this holds both for $\mathcal{T}_1^{\ast}$ and in $\mathcal{T}_2^{\ast}$.
Thus, the split counts $f(i,k)$ and $c(i,k)$ induced by $i$ and $k$ are the same in $\mathcal{T}_1^{\ast}$ and in $\mathcal{T}_2^{\ast}$.
Analogously, this follows if $i$ and $k$ are in $\mathcal{T}_B, \mathcal{T}_C$ or $\mathcal{T}_D$.
\item Leaf $i$ and edge $k$ are part of different subtrees. Again, let $S$ denote the leaf set of $\mathcal{T}_s$ with  $s \in \{A,B,C,D\}$. Exemplarily we assume that leaf $i$ is in $\mathcal{T}_A$ and edge $k$ is in $\mathcal{T}_B$. Then $k$ induces a split $B_1 \vert B_2$ in $\mathcal{T}_B$ (and thus in $\mathcal{T}_1^{\ast}$ and $\mathcal{T}_2^{\ast}$) with $B_1, B_2 \subset B, \, B_1 \cap B_2 = \emptyset,$ and $B_1 \cup B_2 = B$. Without loss of generality let $B_1$ be the set of leaves of $\mathcal{T}_B$ that is still connected to the rest of $\mathcal{T}_1^{\ast}$, respectively $\mathcal{T}_2^{\ast}$, while $B_2$ is the set that is far from it. Then both in $\mathcal{T}_1^{\ast}$ and in $\mathcal{T}_2^{\ast}$ we have
	\begin{align*}
	c(i,k) &= \vert A \vert + \vert B_1 \vert + \vert C \vert + \vert D \vert \\
	f(i,k) &= \vert B_2 \vert.
    \end{align*}
Thus, both trees exhibit the same split counts induced by leaf $i$ and edge $k$. Analogously this follows for all other cases where leaf $i$ and edge $k$ are in different subtrees.
\item Now consider the edges $e_A, \ldots, e_D$ as depicted in Figure \ref{Fig4}. Let $S$ denote the leaf set of $\mathcal{T}_s$ with $s \in \{A,\ldots, D\}$. Then both in $\mathcal{T}_1^{\ast}$ and $\mathcal{T}_2^{\ast}$, $e_A$ induces the split $A \vert BCD$, $e_B$ induces the split $B \vert ACD$ and so forth. Exemplarily, we consider $e_A$ and distinguish between two cases:
		\begin{enumerate}
		\item Leaf $i \in A$. Then $c(i,e_A) = \vert A \vert$ and $f(i,e_A) = \vert B \vert + \vert C \vert + \vert D \vert$. 
		\item Leaf $i \notin A$. Then $c(i,e_A) = \vert B \vert + \vert C \vert + \vert D \vert$ and $f(i,e_A) = \vert A \vert$.
		Again, this holds both in $\mathcal{T}_1^{\ast}$ and in $\mathcal{T}_2^{\ast}$, thus the split counts induced by a leaf $i$ and edge $e_A$ are the same. 
		\end{enumerate}
		Analogously, this follows for $e_B, e_C$ and $e_D$.
\item Now consider edge $I_1$. Again, we use $S$ to denote the leaf set of $\mathcal{T}_s$ with $s \in \{A, \ldots, D\}$. In $\mathcal{T}_1^{\ast}$, $I_1$ induces the split $AC \vert BD$, while in $\mathcal{T}_2^{\ast}$ it induces the split $AD \vert BC$. Recall that by assumption $\vert A \vert = \vert B \vert$ and $\vert C \vert = \vert D \vert$.
	\begin{enumerate}
	\item Leaf $i \in A$:
		\begin{itemize}
		\item Split counts in $\mathcal{T}_1^{\ast}$:
			\begin{align*}
			c(i, I_1) = \vert A \vert + \vert C \vert = \vert A \vert + \vert D \vert \\
			f(i, I_1) = \vert B \vert + \vert D \vert = \vert B \vert + \vert C \vert
			\end{align*}
		\item Split counts in $\mathcal{T}_2^{\ast}$:
			\begin{align*}
			c(i, I_1) = \vert A \vert + \vert D \vert \\
			f(i, I_1) = \vert B \vert + \vert C \vert
			\end{align*}
		\end{itemize}
	\item Leaf $i \in B$:
		\begin{itemize}
		\item Split counts in $\mathcal{T}_1^{\ast}$:
			\begin{align*}
			c(i, I_1) = \vert B \vert + \vert D \vert = \vert B \vert + \vert C \vert \\
			f(i, I_1) = \vert A \vert + \vert C \vert = \vert A \vert + \vert D \vert
			\end{align*}
		\item Split counts in $\mathcal{T}_2^{\ast}$:
			\begin{align*}
			c(i, I_1) = \vert B \vert + \vert C \vert \\
			f(i, I_1) = \vert A \vert + \vert D \vert
			\end{align*}
		\end{itemize}
		\item Leaf $i \in C$:
		\begin{itemize}
		\item Split counts in $\mathcal{T}_1^{\ast}$:
			\begin{align*}
			c(i, I_1) = \vert A \vert + \vert C \vert = \vert B \vert + \vert C \vert \\
			f(i, I_1) = \vert B \vert + \vert D \vert = \vert A \vert + \vert D \vert
			\end{align*}
		\item Split counts in $\mathcal{T}_2^{\ast}$:
			\begin{align*}
			c(i, I_1) = \vert B \vert + \vert C \vert \\
			f(i, I_1) = \vert A \vert + \vert D \vert
			\end{align*}
		\end{itemize}
		\item Leaf $i \in D$:
		\begin{itemize}
		\item Split counts in $\mathcal{T}_1^{\ast}$:
			\begin{align*}
			c(i, I_1) = \vert B \vert + \vert D \vert = \vert A \vert + \vert D \vert \\
			f(i, I_1) = \vert A \vert + \vert C \vert = \vert B \vert + \vert C \vert
			\end{align*}
		\item Split counts in $\mathcal{T}_2^{\ast}$:
			\begin{align*}
			c(i, I_1) = \vert A \vert + \vert D \vert \\
			f(i, I_1) = \vert B \vert + \vert C \vert
			\end{align*}
		\end{itemize}
	\end{enumerate}
	Thus, in all cases the split counts induced by edge $I_1$ and any leaf coincide in $\mathcal{T}_1^{\ast}$ and $\mathcal{T}_2^{\ast}$.
\end{itemize}
Since $\mathcal{T}_1^{\ast}$ and $\mathcal{T}_2^{\ast}$ exhibit the same split counts $c(i,k)$ and $f(i,k)$ for each leaf $i$ and edge $k$, they induce permutation-equivalent Shapley transformation matrices and permutation-equivalent null spaces.
\qed
\end{proof}

Note that the above Theorems (Theorems \ref{observation1} and \ref{observation2}) show that determining or estimating split counts from data cannot assist in determining the correct tree topology, because non-isomorphic trees may exhibit identical split counts. 
Neither does estimating the Shapley value from data assist in determining the correct tree topology, because non-isomorphic trees may also have identical Shapley values. Consider for example $\mathcal{T}_1$ and $\mathcal{T}_2$ depicted in Figure \ref{Fig3} and set all edge lengths to one. Since $\mathcal{T}_1$ and $\mathcal{T}_2$ have permutation-equivalent Shapley transformation matrices, the Shapley values of the leaves in $\mathcal{T}_1$ and $\mathcal{T}_2$ coincide.

Thus, we conclude this section with another theorem.
\begin{theorem}
Neither split counts nor the Shapley values of all leaves (e.g. estimated from data) suffice to reconstruct the correct tree topology.
\end{theorem}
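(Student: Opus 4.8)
The plan is to derive this statement as a direct corollary of Theorems~\ref{observation1} and~\ref{observation2} together with Theorem~\ref{theorem1}, so the work is essentially book-keeping. First I would recall that, by Theorem~\ref{observation1} (or its generalization Theorem~\ref{observation2}), there exist two non-isomorphic phylogenetic trees $\mathcal{T}_1$ and $\mathcal{T}_2$ --- for concreteness the pair on $16$ leaves depicted in Figure~\ref{Fig3} --- that exhibit the same split counts $c(i,k)$ and $f(i,k)$ for every leaf $i$ and edge $k$ (under a suitable matching of leaves to leaves and edges to edges). This already settles the first half: the collection of split counts is identical for the two trees, so no rule taking split counts as input can distinguish the two topologies, and hence split counts do not suffice to reconstruct the correct tree.

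Next I would pass from split counts to Shapley values. By Theorem~\ref{theorem1} the entries of the Shapley transformation matrix are determined entirely by the split counts via $\mathbf{M}[i,k] = f(i,k)/(n\,c(i,k))$; since $\mathcal{T}_1$ and $\mathcal{T}_2$ share the same split counts, their Shapley transformation matrices are permutation-equivalent (in fact, for the pair in Figure~\ref{Fig3}, identical). Assigning every edge length the value $1$ then gives $SV(X, PD_{\mathcal{T}_1}) = \mathbf{M}\,\overrightarrow{E} = SV(X, PD_{\mathcal{T}_2})$, so the full vector of Shapley values coincides for the two non-isomorphic trees. Thus knowledge of the Shapley values of all leaves --- as one would obtain by estimating them from data --- likewise fails to determine the topology, which proves the second half.

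There is no genuine obstacle here; the only point requiring care is the reading of ``suffice to reconstruct'': it must be understood as ``there is no function from split-count data (respectively Shapley-value data) to topologies that is correct on all trees'', and this universal claim is refuted by the single counterexample already constructed. If desired, the statement can be reinforced by invoking Theorem~\ref{observation2} to note that such indistinguishable pairs exist on arbitrarily many leaves, so the failure is not an artefact of one small example but persists for trees of unbounded size.
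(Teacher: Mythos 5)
Your proposal is correct and matches the paper's own argument: the paper likewise derives this theorem as an immediate corollary of Theorems~\ref{observation1} and \ref{observation2}, citing the pair $\mathcal{T}_1, \mathcal{T}_2$ of Figure~\ref{Fig3} with identical split counts and then setting all edge lengths to one to obtain identical Shapley values for two non-isomorphic trees. No gaps; the reading of ``suffice to reconstruct'' that you make explicit is exactly the intended one.
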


\section{The Shapley Value and the Noah's Ark Problem}
We now turn to an application of the Shapley value of phylogenetic trees, namely its use as a criterion for prioritizing species in nature conservation. In particular, we consider a simple variant of the so-called \emph{Noah's ark problem} (NAP) (cf. \citet{Weitzman1998}) and compare the Shapley value to a greedy approach introduced by \citet{Steel2005}. 
To be precise, we look for a subset $W$ of $X$ of given size, say $k$, that has maximal $PD$ score. 
In other words, given a phylogenetic $X$-tree $\mathcal{T}$ and $k \in \mathbb{N}$, we look for a maximum-weight subtree of $\mathcal{T}$ with $k$ leaves. 
\citet{Steel2005} showed that a greedy algorithm can solve this problem. 
For $k \geq 1$, let 
$$ pd_k = \max \{ PD(W): \, W \subseteq X, \, \vert W \vert = k \}$$
denote the largest possible phylogenetic diversity value across all subsets of species of size $k$ and let
$$ PD_k = \{ W \subseteq X: \,  \vert W \vert = k \text{ and } PD(W) = pd_k\}$$
be the set of all collections of $k$ species that realize this maximal phylogenetic diversity (taken from \citet{Steel2005}). Then a greedy algorithm can be used to determine $PD_k$.

\begin{theorem}[\citet{Steel2005}] \label{theorem4}
$PD_k$ consists precisely of those subsets of $X$ of size $k$ that can be built up as follows: 
Select any pair of species that are maximally far apart (in the edge-weighted tree $\mathcal{T}$) and then sequentially add elements of $X$ so as to maximize at each step the increase in $PD$ score.
\end{theorem}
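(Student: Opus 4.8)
The plan is to prove the stated set equality by a double inclusion, both halves driven by an analysis of marginal phylogenetic diversity. For $W \subseteq X$ write $T_W$ for the minimal subtree of $\mathcal{T}$ spanning the leaves $\phi(W)$, so that $PD(W) = \sum_{e \in T_W} \alpha_e$. The basic observation to record is that, for $W \neq \emptyset$ and $x \notin W$, the marginal gain $\delta(x;W) := PD(W \cup \{x\}) - PD(W)$ equals the total length of the pendant path joining $\phi(x)$ to $T_W$ (precisely the edges whose side away from $T_W$ contained no leaf of $W$ before but contains $\phi(x)$ after), and that this pendant path only shortens as $W$ grows. Hence $PD$ is monotone and submodular, with $PD(\emptyset) = PD(\{x\}) = 0$ and $PD(\{a,b\}) = d_T(a,b)$; in particular $pd_2$ is the diameter of $\mathcal{T}$ and $PD_2$ is exactly the collection of maximally-far-apart pairs, which settles the case $k = 2$ of both inclusions and identifies the legal starting states of the greedy procedure.

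For $k \geq 3$ I would run an induction resting on two ``nesting'' facts: (E) every $PD$-optimal set of size $j$ (for $2 \leq j < n$) is contained in some $PD$-optimal set of size $j+1$, and (R) every $PD$-optimal set of size $j+1$ contains a $PD$-optimal set of size $j$. Granting (E): if $W$ is optimal of size $j$ and $x^{\ast}$ maximises $\delta(\cdot;W)$ over admissible leaves, pick $z$ with $W \cup \{z\}$ optimal; then $PD(W \cup \{x^{\ast}\}) \geq PD(W \cup \{z\}) = pd_{j+1}$, so any greedy extension of an optimal set is again optimal, and starting from a diameter pair this shows by induction that every greedily built $k$-set lies in $PD_k$. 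Conversely, granting (R): if $W$ is optimal of size $k \geq 3$, delete $y \in W$ with $W \setminus \{y\}$ optimal of size $k-1$; since $W$ is a size-$k$ optimum we have $\delta(y;W\setminus\{y\}) \geq \delta(x;W\setminus\{y\})$ for every admissible $x$, so $y$ is a legal greedy choice from $W \setminus \{y\}$; peeling off such elements repeatedly down to an optimal (hence diameter) pair and reading the peeling in reverse realises $W$ as an output of the greedy procedure.

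It remains to establish (E) and (R), and this is the heart of the matter. I would prove both by the same uncrossing/exchange argument. For (E), take an optimal $W$ of size $j$ and, among all optimal sets of size $j+1$, one $W'$ maximising $|W \cap W'|$; suppose $W \not\subseteq W'$ and fix $y \in W \setminus W'$. Follow the path from $\phi(y)$ until it first meets $T_{W'}$, at a vertex $v$; using that $T_{W'}$ is a tree with at least three leaves, identify a leaf $z \in W' \setminus W$ lying on a branch of $T_{W'}$ that dangles from $v$, and show that $PD((W' \setminus \{z\}) \cup \{y\}) \geq PD(W') = pd_{j+1}$, contradicting the maximality of $|W \cap W'|$; the case (R) is symmetric, modifying the smaller set instead. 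The inequality one needs is that the length of the pendant path of $z$ in $T_{W'}$ does not exceed the gain $\delta(y; W' \setminus \{z\})$. The main obstacle is precisely this exchange inequality: plain submodularity is too weak (it yields only an approximation, not exact optimality), so one must exploit the laminar structure of pendant paths in a tree and choose $z$ so that the piece of $T_{W'}$ discarded when $z$ leaves is dominated by the piece re-attached when $y$ enters — ultimately leaning on the optimality of $W$ (equivalently, on having started from a maximally-far-apart pair) to preclude a strict loss.
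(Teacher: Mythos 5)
First, note that the paper itself offers no proof of this statement: Theorem \ref{theorem4} is quoted from \citet{Steel2005} without argument, so there is no internal proof to compare yours against, and what follows assesses your proposal on its own terms. The outer architecture of your argument is sound. The identification of the marginal gain $\delta(x;W)$ with the length of the pendant path from $\phi(x)$ to $T_W$, the observation that $PD_2$ is exactly the set of diameter pairs, and the two reductions --- that (E) makes every greedy extension of an optimal set optimal, and that (R) together with the optimality of $W$ makes the deleted element $y$ a legal greedy choice from $W\setminus\{y\}$ --- are all correct, and they do reduce the theorem to the nesting facts (E) and (R). (Indeed, (E) and (R) are equivalent to the theorem given your reductions, which is a useful sanity check but also means that all of the mathematical content has been pushed into them.)

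The genuine gap is that (E) and (R) are not proven. Your uncrossing sketch --- maximize $|W\cap W'|$, walk from $\phi(y)$ to its attachment vertex $v$ on $T_{W'}$, and swap in $y$ for a leaf $z\in W'\setminus W$ dangling from $v$ --- stops exactly at the inequality that carries all the difficulty, namely that the pendant path of $z$ in $T_{W'}$ is no longer than the path re-attached when $y$ enters, and you acknowledge as much. As set up, this does not follow from the optimality of $W$: that hypothesis only yields $PD(W)\geq PD(W'\setminus\{z\})$, which compares the wrong quantities, whereas what is needed is $\delta(y;W'\setminus\{z\})\geq\delta(z;W'\setminus\{z\})$, a comparison of two marginal gains relative to $W'\setminus\{z\}$; submodularity bounds gains from above, not losses from below, so it cannot bridge this either. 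Moreover, the existence of a leaf $z\in W'\setminus W$ in the required position is not justified --- the branches of $T_{W'}$ hanging off $v$ may all terminate in leaves of $W\cap W'$, and choosing $z$ elsewhere can shorten $y$'s attachment path and invalidate the accounting. Closing this requires a genuinely global exchange argument, for instance pairing the elements of the symmetric difference of $W$ and $W'$ along the vertex-disjoint pendant paths they contribute, as in Steel's original proof or in Pardi and Goldman's treatment of the strong exchange property of $PD$, rather than a single local swap. Until such a lemma is supplied, what you have is a correct skeleton around an unproven core.
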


\noindent \citet{Haake2008} now state the following question:
\begin{itemize}
	\item If we use the Shapley value to rank species in the Noah's ark problem for preservation, to what extent can we guarantee that the diversity of the top $k$ species (i.e., the weight of the subtree spanning them) approximates the total diversity of all $n$ species? 
\end{itemize}

\noindent In the following, we show that for certain trees the diversity of the top $k$ species (ranked by their Shapley values) tends to zero, while the diversity of all $n$ species tends to infinity. Thus, the top $k$ species cannot approximate the total diversity of all $n$ species.

\begin{figure}[htbp]
\centering
\includegraphics[scale=0.7]{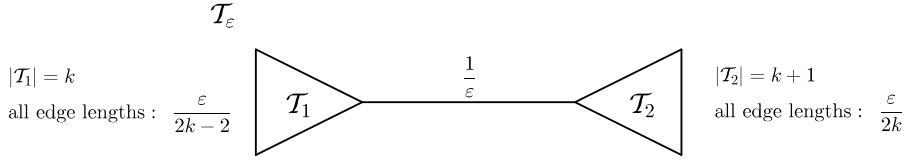}
\caption{Phylogenetic tree $\mathcal{T}_{\varepsilon}$ on $n=2k+1$ leaves consisting of two subtrees $\mathcal{T}_1$ and $\mathcal{T}_2$ with $\vert \mathcal{T}_1 \vert = k$ and $\vert \mathcal{T}_2 \vert = k+1$. The edge connecting $\mathcal{T}_1$ and $\mathcal{T}_2$ has length $\frac{1}{\varepsilon}$, while all edges in $\mathcal{T}_1$ have length $\frac{\varepsilon}{2k-2}$ and all edges in $\mathcal{T}_2$ have length $\frac{\varepsilon}{2k}$.}
\label{Fig5}
\end{figure}

\begin{theorem}
Let $\mathcal{T}_{\varepsilon}$ be a phylogenetic tree on $n=2k+1$ leaves consisting of two subtrees $\mathcal{T}_1$ and $\mathcal{T}_2$ with $\vert \mathcal{T}_1 \vert = k$ and $\vert \mathcal{T}_2 \vert = k+1$ and $k \geq 1$ as depicted in Figure \ref{Fig5}. Let the edge connecting $\mathcal{T}_1$ and $\mathcal{T}_2$ have length $\frac{1}{\varepsilon}$, and let all edges in $\mathcal{T}_1$ have length $\frac{\varepsilon}{2k-2}$ and all edges in $\mathcal{T}_2$ have length $\frac{\varepsilon}{2k}$.
Moreover, let 
$$ 0 < \varepsilon < \sqrt{\frac{k+2}{k^3+3k^2-2}}.$$
Then the top $k' \leq k$ species (ranked by their Shapley values) are all in $\mathcal{T}_1$, and for $\varepsilon \rightarrow 0$ their diversity tends to zero, while the diversity of all $n$ species tends to infinity. 
\end{theorem}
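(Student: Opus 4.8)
The plan is to handle the two limiting assertions separately; only the statement that the top $k'$ species all lie in $\mathcal{T}_1$ needs real work. For the total diversity, note that a rooted binary tree on $m$ leaves has $2m-2$ edges, so $\mathcal{T}_1$ has $2k-2$ edges and $\mathcal{T}_2$ has $2k$ edges; hence the edges inside $\mathcal{T}_1$ contribute $(2k-2)\cdot\tfrac{\varepsilon}{2k-2}=\varepsilon$ to $PD(X)$, those inside $\mathcal{T}_2$ contribute $2k\cdot\tfrac{\varepsilon}{2k}=\varepsilon$, and the central edge contributes $\tfrac1\varepsilon$, so $PD(X)=\tfrac1\varepsilon+2\varepsilon\to\infty$ as $\varepsilon\to0$. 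Moreover, for any leaf set $W$ contained in $\mathcal{T}_1$ the smallest spanning tree of $W$ stays inside $\mathcal{T}_1$, so $PD(W)\le\varepsilon\to0$. Everything therefore reduces to proving the strict separation $\min_{i\in\mathcal{T}_1}SV_i>\max_{j\in\mathcal{T}_2}SV_j$: since $\vert\mathcal{T}_1\vert=k$, this forces the $k'\le k$ leaves of highest Shapley value to lie in $\mathcal{T}_1$, and then their diversity is at most $\varepsilon\to0$ while $PD(X)\to\infty$.

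To obtain the separation I would use Theorem~\ref{theorem1} to write $SV_\ell=\sum_e\tfrac{f(\ell,e)}{n\,c(\ell,e)}\,\alpha_e$ with $n=2k+1$, and split the sum over the central edge $e_0$, the edges of $\mathcal{T}_1$, and the edges of $\mathcal{T}_2$. For $e_0$ the split counts are topology-independent: $(c,f)=(k,k+1)$ for every leaf of $\mathcal{T}_1$ and $(c,f)=(k+1,k)$ for every leaf of $\mathcal{T}_2$, so its contribution to $SV_\ell$ is $\tfrac{k+1}{(2k+1)k}\cdot\tfrac1\varepsilon$ for $\mathcal{T}_1$-leaves and $\tfrac{k}{(2k+1)(k+1)}\cdot\tfrac1\varepsilon$ for $\mathcal{T}_2$-leaves, a difference of exactly $\tfrac1{k(k+1)\varepsilon}$. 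The remaining two blocks are $O(\varepsilon)$, since every coefficient $\tfrac{f(\ell,e)}{n\,c(\ell,e)}$ is strictly less than $1$ and the edges of each subtree have total length $\varepsilon$; one gets a sharper estimate by separating the pendant edge at $\ell$ (coefficient $\tfrac{n-1}{n}$) from the rest and by using that, for a leaf $\ell$ lying outside a subtree, the far-counts of that subtree's edges never exceed its leaf number minus one. Putting the pieces together gives, for $i\in\mathcal{T}_1$ and $j\in\mathcal{T}_2$, an inequality $SV_i-SV_j\ge\tfrac1{k(k+1)\varepsilon}-h(k)\,\varepsilon$ with an explicit constant $h(k)$ coming from the $O(\varepsilon)$ bounds; the right-hand side is positive precisely when $\varepsilon^2<\tfrac1{k(k+1)h(k)}$, and tracking the constants through yields the stated bound $\varepsilon<\sqrt{\tfrac{k+2}{k^3+3k^2-2}}$ (here $k^3+3k^2-2=(k+1)(k^2+2k-2)$).

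The main obstacle is the $O(\varepsilon)$ control of the within-subtree contributions. The split counts of edges inside $\mathcal{T}_1$ and $\mathcal{T}_2$ depend on the shapes of these subtrees, which are not fixed in the statement, so the estimate must be uniform over all topologies; equivalently one must identify the shape that simultaneously drives $\min_{i\in\mathcal{T}_1}SV_i$ down and $\max_{j\in\mathcal{T}_2}SV_j$ up — intuitively a caterpillar, for which a pendant edge's large coefficient $\tfrac{n-1}{n}$ acts on a leaf that is as deep as possible — and carry the estimate out there. If instead $\mathcal{T}_1$ and $\mathcal{T}_2$ were pinned to particular shapes, this step would collapse to a direct computation; in general it is this worst-case analysis, rather than any single calculation, that fixes the admissible range of $\varepsilon$. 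Everything else is a routine application of Theorem~\ref{theorem1} together with the identity $PD(X)=\tfrac1\varepsilon+2\varepsilon$.
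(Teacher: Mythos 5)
Your proposal follows essentially the same route as the paper: reduce everything to the separation $SV^{\min}_{\mathcal{T}_1}>SV^{\max}_{\mathcal{T}_2}$, decompose each Shapley value via Theorem~\ref{theorem1} into the central-edge contribution (whose gap is exactly $\tfrac{1}{k(k+1)\varepsilon}$) plus $O(\varepsilon)$ within-subtree terms, and solve the resulting inequality for $\varepsilon$. The one step you flag as the main obstacle --- uniform control of the within-subtree contributions over all subtree shapes --- is handled in the paper more crudely than you anticipate, by topology-independent bounds on the split counts (e.g.\ $f\geq 1$, $c\leq 2k$ for a $\mathcal{T}_1$-leaf and $f\leq k-1$, $c\geq k+2$ resp.\ $f\leq 2k$, $c\geq 1$ for a $\mathcal{T}_2$-leaf), which already yield exactly the stated threshold $\varepsilon<\sqrt{\tfrac{k+2}{k^3+3k^2-2}}$ without any worst-case caterpillar analysis.
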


\begin{proof}
Let $\mathcal{T}_{\varepsilon}$ be as depicted in Figure \ref{Fig5}. 
Let the leaf set of $\mathcal{T}_1$ be $\{s_1^1, s_2^1, \ldots, s_k^1\}$ and let the leaf set of $\mathcal{T}_2$ be $\{s_1^2, s_2^2, \ldots, s_{k+1}^2\}$ with $k \geq 1$.
Let $s_1, \ldots, s_{k'}$ be the top $k'$ species (ranked by their Shapley values) with $k' \leq k$. 
In the following, we will show that for $ 0 < \varepsilon < \sqrt{\frac{k+2}{k^3+3k^2-2}}$ the species $s_1, \ldots, s_{k'}$ are all leaves of $\mathcal{T}_1$.
Note that $\mathcal{T}_1$ is a rooted phylogenetic tree on $k$ leaves and $\mathcal{T}_2$ is a rooted phylogenetic tree on $k+1$ leaves. Thus, $\mathcal{T}_1$ has $2k-2$ edges and $\mathcal{T}_2$ has $2(k+1)-2 = 2k$ edges (cf. \citet[p. 10]{Book_Steel}).
Thus, the diversity spanned by all leaves of $\mathcal{T}_1$ calculates as
$$ PD(\{s_1^1, s_2^1, \ldots, s_k^1\}) = (2k-2) \cdot \frac{\varepsilon}{2k-2} = \varepsilon,$$
because all edge lengths in $\mathcal{T}_1$ are defined as $\frac{\varepsilon}{2k-2}$.
If we can show that the top $k'$ species $s_1, \ldots, s_{k'}$ (ranked by their Shapley values) are all leaves of $\mathcal{T}_1$ this implies
$$ 0 \leq  PD (\{s_1, \ldots, s_{k'} \}) \leq \varepsilon \, \overset{\varepsilon \rightarrow 0}{\longrightarrow} \, 0.$$
Considering the diversity of all $n=2k+1$ species, however, we have
\begin{align*}
PD(\{s_1^1, s_2^1, \ldots, s_k^1, s_1^2, s_2^2, \ldots, s_{k+1}^2\})  &= (2k-2) \cdot \frac{\varepsilon}{2k-2} + \frac{1}{\varepsilon} + 2k \cdot \frac{\varepsilon}{2k} \\
&= 2 \varepsilon + \frac{1}{\varepsilon} \overset{\varepsilon \rightarrow 0}{\longrightarrow} \, \infty.
\end{align*}
Thus, it remains to show that the top $k'$ species (ranked by their Shapley values) are all in $\mathcal{T}_1$, i.e., we need to show that $\{s_1, \ldots, s_k'\} \subseteq \{s_1^1, \ldots, s_k^1\}$.
The idea is to show that the minimal Shapley value of any species in $\mathcal{T}_1$ is still greater than the maximal Shapley value of any species in $\mathcal{T}_2$. Thus, we define
\begin{align*}
SV_{\mathcal{T}_1}^{\min} &\coloneqq \min\limits_{s_i^1 \, \in \, \{s_1^1, \ldots, s_k^1\}} SV(s_i^1) \, \text{ and } \\
SV_{\mathcal{T}_2}^{\max} &\coloneqq \max\limits_{s_j^2 \, \in \, \{s_1^2, \ldots, s_{k+1}^2\}} SV(s_j^2).
\end{align*}
We now develop bounds for $SV_{\mathcal{T}_1}^{\min}$ and $SV_{\mathcal{T}_2}^{\max}$ and then show that if
$0 < \varepsilon < \sqrt{\frac{k+2}{k^3+3k^2-2}}$, we have $ SV_{\mathcal{T}_1}^{\min} > SV_{\mathcal{T}_2}^{\max}$.
Recall that the Shapley value of a phylogenetic tree game can be calculated by multiplying the Shapley transformation matrix $\mathbf{M}$ with the vector of edge lengths $\overrightarrow{E}$ of the tree, where the $(i,k)$th entry of the Shapley transformation matrix is given by $\mathbf{M}[i,k] = \frac{f(i,k)}{n \, c(i,k)}$ which we call \emph{split factor} in the following (cf. Theorem \ref{theorem1}). 
We now develop a lower bound for $SV_{\mathcal{T}_1}^{\min}$ by considering bounds on $f(i,k)$ and $c(i,k)$.
Let $e$ denote the edge connecting $\mathcal{T}_1$ and $\mathcal{T}_2$. Then we derive the following lower bound for $SV_{\mathcal{T}_1}^{\min}$:
\begin{align*}
SV_{\mathcal{T}_1}^{\min} &\geq \underbrace{\underbrace{\frac{k+1}{(2k+1) k}}_{\text{split factor}} \cdot \underbrace{\frac{1}{\varepsilon}}_{\text{length of edge } e}}_{\text{ contribution of } e} + \underbrace{\underbrace{(2k-2)}_{\text{number of edges in } \mathcal{T}_1} \cdot \underbrace{\frac{1}{(2k+1) 2k}}_{\text{split factor}} \cdot \underbrace{\frac{\varepsilon}{2k-2}}_{\text{length of an edge in }  \mathcal{T}_1}}_{\text{contribution of edges in } \mathcal{T}_1}   \\
&\hspace*{5mm} + \underbrace{\underbrace{2k}_{\text{number of edges in } \mathcal{T}_2} \cdot \underbrace{\frac{1}{(2k+1)2k}}_{\text{split factor}} \cdot \underbrace{\frac{\varepsilon}{2k}}_{\text{length of an edge in } \mathcal{T}_2}}_{\text{contribution of edges in } \mathcal{T}_2} \\
&= \frac{k+1}{\varepsilon (2k^2+k)} + \frac{\varepsilon}{4k^2+2k} + \frac{\varepsilon}{4k^2+2k} \\
&= \frac{k+1}{\varepsilon (2k^2+k)} + \frac{\varepsilon}{2k^2+k}.
\end{align*}
Here, for edge $e$ separating $\mathcal{T}_1$ and $\mathcal{T}_2$ and any leaf $i$ in $\mathcal{T}_1$, we have $f(i,e) = k+1$ and $c(i,e)=k$, because there are $k+1$ leaves in $\mathcal{T}_2$ (that is \enquote{far} from $\mathcal{T}_1$) and $\mathcal{T}_1$ has $k$ leaves. For any edge $e'$ in $\mathcal{T}_1$ or $\mathcal{T}_2$ and any leaf $i$ in $\mathcal{T}_1$, we have $f(i,e') \geq 1$ and $c(i,e') \leq 2k$, because $e'$ separates at least one leaf from the rest of the tree.
Similarly, we develop an upper bound for $SV_{\mathcal{T}_2}^{\max}$.
\begin{align*}
SV_{\mathcal{T}_2}^{\max} &\leq \underbrace{\underbrace{\frac{k}{(2k+1)(k+1)}}_{\text{split factor}} \cdot \underbrace{\frac{1}{\varepsilon}}_{\text{length of edge } e}}_{\text{ contribution of } e} + \underbrace{\underbrace{(2k-2)}_{\text{number of edges in } \mathcal{T}_1} \cdot \underbrace{\frac{k-1}{(2k+1)(k+2)}}_{\text{split factor}} \cdot \underbrace{\frac{\varepsilon}{2k-2}}_{\text{length of an edge in }  \mathcal{T}_1}}_{\text{contribution of edges in } \mathcal{T}_1}   \\
&\hspace*{5mm} + \underbrace{\underbrace{2k}_{\text{number of edges in } \mathcal{T}_2} \cdot \underbrace{\frac{2k}{(2k+1) \cdot 1}}_{\text{split factor}} \cdot \underbrace{\frac{\varepsilon}{2k}}_{\text{length of an edge in } \mathcal{T}_2}}_{\text{contribution of edges in } \mathcal{T}_2} \\
&= \frac{k}{\varepsilon (2k^2+3k+1)} + \frac{\varepsilon (k-1)}{2k^2+5k+2} + \frac{\varepsilon 2k}{2k+1} \\
&= \frac{k}{\varepsilon (2k^2+3k+1)} + \frac{\varepsilon (2k^2+5k-1)}{2k^2+5k+2}.
\end{align*}
Here, for edge $e$ and any leaf $j$ in $\mathcal{T}_2$, we have $f(j,e)=k$ and $c(j,e)=k+1$, because $e$ separates $\mathcal{T}_1$ from $\mathcal{T}_2$. For an edge $e_1$ in $\mathcal{T}_1$ and any leaf $j$ in $\mathcal{T}_2$, $e_1$ separates at most $k-1$ leaves in $\mathcal{T}_1$ from $j$, thus $f(j,e_1) \leq k-1$ and $c(j,e_1) \geq k+2$. Conversely, for an edge $e_2$ in $\mathcal{T}_2$ and any leaf $j$ in $\mathcal{T}_2$, $e_2$ separates at most $k$ leaves of $\mathcal{T}_1$ and $k$ leaves of $\mathcal{T}_2$ from $j$. Thus, $f(j,e_2) \leq 2k$ and $c(j,e_2) \geq 1$.
Now, we compare the lower bound for $SV_{\mathcal{T}_1}^{\min}$ and the upper bound for $SV_{\mathcal{T}_2}^{\max}$. Using Mathematica (\citet{Mathematica}) we solved the inequality
\begin{equation*} \label{inequality}
 \frac{k+1}{\varepsilon (2k^2+k)} + \frac{\varepsilon}{2k^2+k} > \frac{k}{\varepsilon (2k^2+3k+1)} + \frac{\varepsilon (2k^2+5k-1)}{2k^2+5k+2},
\end{equation*}
where $k \geq 1$. 
We found that the above inequality holds for 
$$0 < \varepsilon < \sqrt{\frac{k+2}{k^3+3k^2-2}}.$$
This means that for $0 < \varepsilon < \sqrt{\frac{k+2}{k^3+3k^2-2}}$, we have $SV_{\mathcal{T}_1}^{\min} > SV_{\mathcal{T}_2}^{\max} $, and thus all species in $\mathcal{T}_1$ have a higher Shapley value than the species in $\mathcal{T}_2$, and in particular the top $k'$ species with $k' \leq k$ are all leaves of $\mathcal{T}_1$, which completes the proof.
\qed
\end{proof}

\begin{remark}
Note that following the greedy approach of Theorem \ref{theorem4}, we would first select a pair of species $s_i^1$ and $s_j^2$, where $s_i^1$ is in $\mathcal{T}_1$ and $s_j^2$ is in $\mathcal{T}_2$ and then sequentially add $k'-2$ leaves of $\mathcal{T}_1$ or $\mathcal{T}_2$ that maximize the increase in the $PD$ score. 
Recall that the diversity of all $n=2k+1$ species was given by 
$$ PD(\{s_1^1, s_2^1, \ldots, s_k^1, s_1^2, s_2^2, \ldots, s_{k+1}^2\}) = 2 \varepsilon + \frac{1}{\varepsilon}.$$
Now let $\{ s_i^1, s_j^2, \tilde{s}_1, \tilde{s}_2, \ldots, \tilde{s}_{k'-2}\}$ be the set of the top $k'$ species obtained from the greedy algorithm. 
Then
$$ PD(\{ s_i^1, s_j^2, \tilde{s}_1, \tilde{s}_2, \ldots, \tilde{s}_{k'-2}\}) = a \cdot \varepsilon + \frac{1}{\varepsilon} \overset{\varepsilon \rightarrow 0}{\longrightarrow} \, \infty,$$
where $a < 2$.

This implies that the absolute difference between the diversity of the top $k' \leq k, k' \geq 2$ greedy species and the diversity of all $n=2k+1$ species may be arbitrarily small, because
\begin{align*}
\left\vert PD(\{s_1^1, s_2^1, \ldots, s_k^1, s_1^2, s_2^2, \ldots, s_{k+1}^2\}) - PD(\{ s_i^1, s_j^2, \tilde{s}_1, \tilde{s}_2, \ldots, \tilde{s}_{k'-2}\}) \right\vert &= \left\vert 2 \varepsilon + \frac{1}{\varepsilon} - (a \cdot \varepsilon + \frac{1}{\varepsilon}) \right\vert \\
&= \left\vert (2-a) \varepsilon \right\vert \overset{\varepsilon \rightarrow 0}{\longrightarrow} \, 0.
\end{align*}
On the other hand, the absolute difference between the diversity of the top $k'$ greedy species and the diversity of the top $k'$ Shapley species with $k' \leq k, k' \geq 2$ may be arbitrarily large, because
\begin{align*}
\left\vert \underbrace{PD(\{ s_i^1, s_j^2, \tilde{s}_1, \tilde{s}_2, \ldots, \tilde{s}_{k'-2}\}}_{\geq \frac{1}{\varepsilon}} - \underbrace{PD(\{s_1, \ldots, s_k'\}}_{\leq \varepsilon} \right\vert 
\geq \left\vert \frac{1}{\varepsilon} - \varepsilon \right\vert \overset{\varepsilon \rightarrow 0}{\longrightarrow} \, \infty.
\end{align*}
\end{remark}

\section{Conclusions}
In this paper we have considered the Shapley value of unrooted phylogenetic trees and have answered some of the questions posed in \citet{Haake2008}. 
Firstly, we have revisited the linear transformation that allows for the calculation of the Shapley value from the edge lengths of a trees (cf. \citet{Haake2008}) and have shown that non-isomorphic trees may have permutation-equivalent Shapley transformation matrices and permutation-equivalent null spaces. This implies that neither estimating or determining the so-called split counts associated with a tree nor the Shapley values of its leaves suffice to reconstruct the correct tree topology. 
Note that deciding whether two non-isomorphic trees have permutation-equivalent Shapley transformation matrices may be a hard problem, because it can be related to the so-called \emph{graph isomorphism problem}, whose complexity is not known. Given two finite graphs $G_1=(V_1, E_1)$ and $G_2=(V_2,E_2)$ with $\vert V_1 \vert = \vert V_2 \vert$ and $\vert E_1 \vert = \vert E_2 \vert$, the graph isomorphism problem asks whether $G_1$ and $G_2$ are isomorphic. Let $\vert V_1 \vert = \vert V_2 \vert = n$ and $\vert E_1 \vert = \vert E_2 \vert = m$ and let $I_1$ be the $n \times m$ incidence matrix of $G_1$ and let $I_2$ be the $n \times m$ incidence matrix of $G_2$ (i.e., $I_1[i,j] = 1$ if vertex $i \in V_1$ and edge $j \in E_1$ are incident in $G_1$ and $0$ otherwise (analogously for $I_2$)). Then $G_1$ and $G_2$ are isomorphic if and only if their incidence matrices are permutation-equivalent, i.e., if there exists a permutation matrix $P \in \mathbb{R}^{n \times n}$ and a permutation matrix $Q \in \mathbb{R}^{m \times m}$ such that $P \, I_1 \, Q = I_2$.
Even though Shapley transformation matrices are not incidence matrices (because their entries are different from $0$ and $1$), the problem of deciding whether they are permutation-equivalent or not may be related to the problem of deciding whether two incidence matrices are permutation-equivalent and thus, the problem may be related to the graph isomorphism problem. A direction for future research could therefore be to further analyze the relationship between the graph isomorphism problem and the question whether two Shapley transformation matrices are permutation-equivalent. It would also be of interest to assess the complexity of deciding whether two Shapley transformation matrices are permutation-equivalent or not.

Another direction of further research could be the use of the Shapley value as a conservation criterion in preservation. 
In this manuscript we have considered the application of the Shapley value as a prioritization criterion in a simple variant of the Noah's ark problem (\citet{Weitzman1998}) and compared it to a greedy algorithm (\citet{Steel2005}). It turned out that the Shapley value may perform very badly as a prioritization criterion, meaning that the diversity of the top $k$ species (ranked by their Shapley values) may not approximate the total diversity of all $n$ species at all. Thus, in this case using the Shapley value in order to find a subset of species of size $k$ that maximizes the $PD$ score cannot compete with the greedy algorithm introduced in \citet{Steel2005}. 
Note, however, that our class of trees where the diversity of the top $k' \leq k$ species tends to zero while the diversity of all $2k+1$ species tends to infinity, only works if $k'$ is less than half of the number of all species. It would be of interest to see, whether a similar construction can be found in order to show that the Shapley value will fail as a prioritization criterion for any $k' \leq n$.
It would also be of interest to see if a better performance of the Shapley value can be guaranteed when turning from unrooted to rooted phylogenetic trees. \\

%
%

\begin{acknowledgements}
The first author thanks the Ernst-Moritz-Arndt-University Greifswald for the Landesgraduiertenförderung studentship, under which this work was conducted, and the Barcelona Graduate school of Mathematics (BGSMath) for financial support for attending the Algebraic and Combinatorial Phylogenetics program in Barcelona in June 2017, during which some of the results presented in this manuscript were obtained.
\end{acknowledgements}

\newpage
\bibliographystyle{spbasic}      
\bibliography{Literatur}   

%
%

\begin{appendix}
\section{Appendix} 
In order to find non-isomorphic trees with permutation-equivalent Shapley transformation matrices we have exhaustively analyzed all tree topologies up to 17 taxa and their split counts. To be precise, we have considered different necessary (but not sufficient) conditions for two non-isomorphic trees to have permutation-equivalent Shapley transformation matrices, the details of which will be explained in the following. Note that we have considered these necessary (but not sufficient) conditions as a first step, because they can quickly be checked, while directly examining whether two matrices a permutation-equivalent is time-consuming and not feasible for large matrices. Using these necessary conditions we have performed a candidate search for trees inducing permutation-equivalent Shapley transformation matrices, where the candidates were then further analyzed. We now describe the necessary conditions we used.

\begin{enumerate}

\item Split size sequence: \\
	Recall that the Shapley transformation matrix of a tree $\mathcal{T}$ solely depends on the splits counts associated with its edges (cf. Theorem \ref{theorem1}).
	In order for two tree topologies to have permutation-equivalent Shapley transformation matrices, they must exhibit the same split counts, in particular they must exhibit the same \emph{split sizes}, where for a split $\sigma = A \vert B$ with $A, B \subset X, \, A \cap B = \emptyset,$ and $A \cup B$ we let $\| \sigma \| = \| A \vert B \| = \min \{\vert A \vert, \vert B \vert \}$ denote its size. Note that any binary tree $\mathcal{T}$ on $n$ leaves induces $n$ trivial splits (where either $\vert A \vert = 1$ or $\vert B \vert  = 1$) and $n-3$ non trivial splits. Following \citet{Fischer2015a} we assume an arbitrary ordering of these splits $\sigma_1, \ldots, \sigma_{n-3}$ and define the $(n-3)$ tuple $\tilde{s}(\mathcal{T})$ as follows:
	$$ \tilde{s}(\mathcal{T})_i = \| \sigma_i \| \text{ for all } i=1, \ldots, n-3.$$
We now order the $n-3$ entries of $\tilde{s}(\mathcal{T})$ increasingly and call the resulting ordered sequence the \emph{split size sequence} $s(\mathcal{T})$. Now, in order for two trees to have permutation-equivalent Shapley transformation matrices, their split size sequences must be identical, which gives us a first necessary condition. For $\mathcal{T}_1$ and $\mathcal{T}_2$ depicted in Figure \ref{Fig3} we for example have \\ $s(\mathcal{T}_1) = s(\mathcal{T}_2) = (2,2,2,2,2,2,3,3,4,4,4,4,8,8)$.
\item Matrix entries: \\
If two trees exhibit the same split size sequence, we compute their Shapley transformation matrices and analyze them:
	\begin{enumerate} 
	\item For two matrices $\mathbf{M}_1$ and $\mathbf{M}_2$ to be permutation-equivalent, they must contain the same entries. To check if this is the case, we \enquote{flatten} both matrices and define $s(\mathbf{M}_1)$ to be the sequence containing all matrix elements of $\mathbf{M}_1$ in an increasing order and analogously we define $s(\mathbf{M}_2)$ to be the sequence containing all entries of $\mathbf{M}_2$ ordered increasingly. If $s(\mathbf{M}_1) = s(\mathbf{M}_2)$, the two matrices share the same entries and we proceed with a subsequent analysis of rows and columns.
	\item Recall that two matrices are permutation-equivalent if they are identical up to a permutation of rows and columns. Thus, we derive two additional necessary conditions for two matrices to be permutation-equivalent.
		 \begin{itemize}
		 \item For all rows $r_i^1$ of $\mathbf{M}_1$ we define $s(r_i^1)$ to be the sequence containing the elements of $r_i^1$ in an increasing order. Analogously we define $s(r_j^2)$ to be the sequence containing the elements of a row $r_j^2$ of matrix $\mathbf{M}_2$. Now for all rows $r_i^1$ of $\mathbf{M}$ we check if $s(r_i^1) = s(r_j^2)$ for some row $r_j^2$ of $\mathbf{M}_2$.
		 \item Similarly, we compare the columns of $\mathbf{M}_1$ and $\mathbf{M}_2$. For any column $c_i^1$ of $\mathbf{M}_1$ or $c_j^2$ of $\mathbf{M}_2$ we define $s(c_i^1)$ and $s(c_j^2)$ to be the sequence containing the elements of the corresponding column in an increasing order. Now for all columns $c_i^1$ of $\mathbf{M}$ we check if $s(c_i^1) = s(c_j^2)$ for some column $c_j^2$ of $\mathbf{M}_2$.
		 \end{itemize}
	\end{enumerate}
\end{enumerate}
We now summarize the above conditions in the following algorithm (Algorithm \ref{Algorithm}) that checks whether two non-isomorphic trees $\mathcal{T}_1$ and $\mathcal{T}_2$ are candidates for trees inducing permutation-equivalent Shapley transformation matrices.
\begin{algorithm}

\KwIn{two non-isomorphic trees $\mathcal{T}_1$ and $\mathcal{T}_2$ on $n$ leaves} 
\KwOut{TRUE (trees are possible candidates and have to be further analyzed) or FALSE (trees do not have permutation-equivalent Shapley transformation matrices).}
Compute the split size sequences $s(\mathcal{T}_1)$ and $s(\mathcal{T}_2)$\;
\eIf{$s(\mathcal{T}_1) \neq s(\mathcal{T}_2)$}{
	\Return FALSE\;}
{
	Compute the Shapley transformation matrices $\mathbf{M}_1$ and $\mathbf{M}_2$ and flatten them to the sequences $s(\mathbf{M}_1)$ and $s(\mathbf{M}_2)$ (containing all matrix elements in an increasing order)\;
	\eIf{$s(\mathbf{M}_1) \neq s(\mathbf{M}_2)$} {
		\Return FALSE }
	{
		\ForAll{rows $r_i^1$ of $\mathbf{M}_1$}{
			Sort the entries of $r_i^1$ increasingly and compare this sorted sequence $s(r_i^1)$ to all sorted rows of $\mathbf{M}_2$\;
			\eIf{$s(r_i^1) \neq s(r_j^2)$ for all $j=1, \ldots, $ number of rows of $\mathbf{M}_2$}{
				\Return FALSE;}
			{
			\ForAll{columns $c_i^1$ of $\mathbf{M}_1$}{
				 Sort the entries of $c_i^1$ increasingly and compare this sorted sequence $s(c_i^1)$ to all sorted columns of $\mathbf{M}_2$\;
					\eIf{$s(c_i^1) \neq s(c_j^2)$ for all $j=1, \ldots, $ number of columns of $\mathbf{M}_2$}{
						\Return FALSE\;}
					{
					\Return TRUE\;
					}
				}
			}
		}
	}
}
\caption{Permutation-equivalent Shapley transformation matrices -- Candidate Search}\label{Algorithm}
\end{algorithm}
\hfill \\Note that the algorithm returns TRUE, if the input trees possibly induce permutation-equivalent Shapley transformation matrices and FALSE if this can be ruled out (i.e., any of the necessary conditions introduced above is violated). However, if the algorithm returns TRUE the possible candidates have to be further analyzed, as all conditions mentioned above are necessary for two trees to have permutation-equivalent Shapley transformation matrices, but not sufficient (cf. Example \ref{ex_not_sufficient}). However, we have conducted this candidate search in Mathematica \citet{Mathematica} and have analyzed all tree topologies up to 16 leaves. The only pair of candidates that we found is the pair $(\mathcal{T}_1, \mathcal{T}_2)$ depicted in Figure \ref{Fig3} and used in the proof of Theorem \ref{observation1}. Thus, this pair is the smallest example for a pair of non-isomorphic trees inducing permutation-equivalent Shapley matrices (and thus permutation-equivalent null spaces). Subsequently, we have looked at the case of 17 taxa, where again only one pair of candidate trees was found (trees $\mathcal{T}_1'$ and  $\mathcal{T}_2'$ depicted in Figure \ref{Fig6}). However, as we will explain below, $\mathcal{T}_1'$ and $\mathcal{T}_2'$ do \emph{not} induce permutation-equivalent Shapley transformation matrices, which illustrates the fact that the conditions described above and used in Algorithm \ref{Algorithm} are only necessary, but not sufficient conditions.

\begin{example} \label{ex_not_sufficient}
Consider the pair of trees $(\mathcal{T}_1', \mathcal{T}_2')$ on 17 leaves depicted in Figure \ref{Fig6}. Algorithm \ref{Algorithm} returns TRUE for this pair of trees, i.e., $\mathcal{T}_1'$ and $\mathcal{T}_2'$ are possible candidates for two non-isomorphic trees inducing permutation-equivalent Shapley transformation matrices. However, their Shapley transformation matrices are not permutation-equivalent. To see this, consider the split counts associated with edge $I_{13}$ and compare them for $\mathcal{T}_1'$ and $\mathcal{T}_2'$ (cf. Table \ref{Table_SplitCounts}).
	\begin{table} 
	\centering
	\caption{Split counts induced by edge $I_{13}$}
	\begin{tabular}{c|cc|cc}
	\toprule
	Leaf $i$ & $f_{\mathcal{T}_1}(i, I_{13})$ & $n_{\mathcal{T}_1}(i, I_{13})$ & $f_{\mathcal{T}_2}(i, I_{13})$ & $n_{\mathcal{T}_2}(i, I_{13})$ \\
	\midrule
	1 & 9 & 8 & 9 & 8 \\
	2 & 9 & 8 & 9 & 8 \\
	3 & 9 & 8 & 9 & 8 \\
	4 & 9 & 8 & 9 & 8 \\
	\midrule
	\textbf{5} & 9 & 8 & 8 & 9 \\
	\textbf{6} & 9 & 8 & 8 & 9 \\
	\textbf{7} & 9 & 8 & 8 & 9 \\
	\textbf{8} & 9 & 8 & 8 & 9 \\
	\midrule
	9 & 8 & 9 & 8 & 9 \\
	10 & 8 & 9 & 8 & 9 \\
	11 & 8 & 9 & 8 & 9 \\
	12 & 8 & 9 & 8 & 9 \\
	\midrule
	\textbf{13} & 8 & 9 & 9 & 8 \\
	\textbf{14} & 8 & 9 & 9 & 8 \\
	\textbf{15} & 8 & 9 & 9 & 8 \\
	\textbf{16} & 8 & 9 & 9 & 8 \\
	\midrule
	17 & 8 & 9 & 8 & 9 \\
	\bottomrule
	\end{tabular} 
	\label{Table_SplitCounts}
	\end{table}
For leaves $1,2,3,4,9,10,11,12$ and $17$ edge $I_{13}$ induces the same split counts in both $\mathcal{T}_1'$ and $\mathcal{T}_2'$. However, for leaves $5, 6,7, 8$ and leaves $13,14,15, 16$ the split counts differ. To be precise, we have $f_{\mathcal{T}_1'}(i,I_{13}) = 9$ and $f_{\mathcal{T}_2'}(i,I_{13}) = 8$ for $i = 5,6,7,8$ and $f_{\mathcal{T}_1'}(j,I_{13}) = 8$ and $f_{\mathcal{T}_2'}(j,I_{13}) = 9$ for $j = 13,14,15,16$. At first glance we can make the split counts associated with edge $I_{13}$ coincide for $\mathcal{T}_1'$ and $\mathcal{T}_2'$ by swapping leaves $5,6,7,8$ with leaves $13,14,15,16$ in $\mathcal{T}_2'$ (i.e., by permuting the rows associated with these leaves in the Shapley transformation matrix). However, then the split counts induced by for example edge $I_5$ will differ between $\mathcal{T}_1'$ and $\mathcal{T}_2'$. It can be checked that no permutation of rows or columns of the Shapley transformation matrix $\mathbf{M}_2'$ of $\mathcal{T}_2'$ exists such that it coincides with the Shapley transformation matrix $\mathbf{M}_1'$ of $\mathcal{T}_1'$. Thus, the Shapley transformation matrices of $\mathcal{T}_1'$ and $\mathcal{T}_2'$ are not permutation-equivalent even though Algorithm \ref{Algorithm} suggests them as candidates. This shows that the criteria used in Algorithm \ref{Algorithm} are necessary but not sufficient conditions for two non-isomorphic trees to have permutation-equivalent Shapley transformation matrices.
\end{example}

\begin{figure} [htbp]
	\centering
	\includegraphics[scale=0.9]{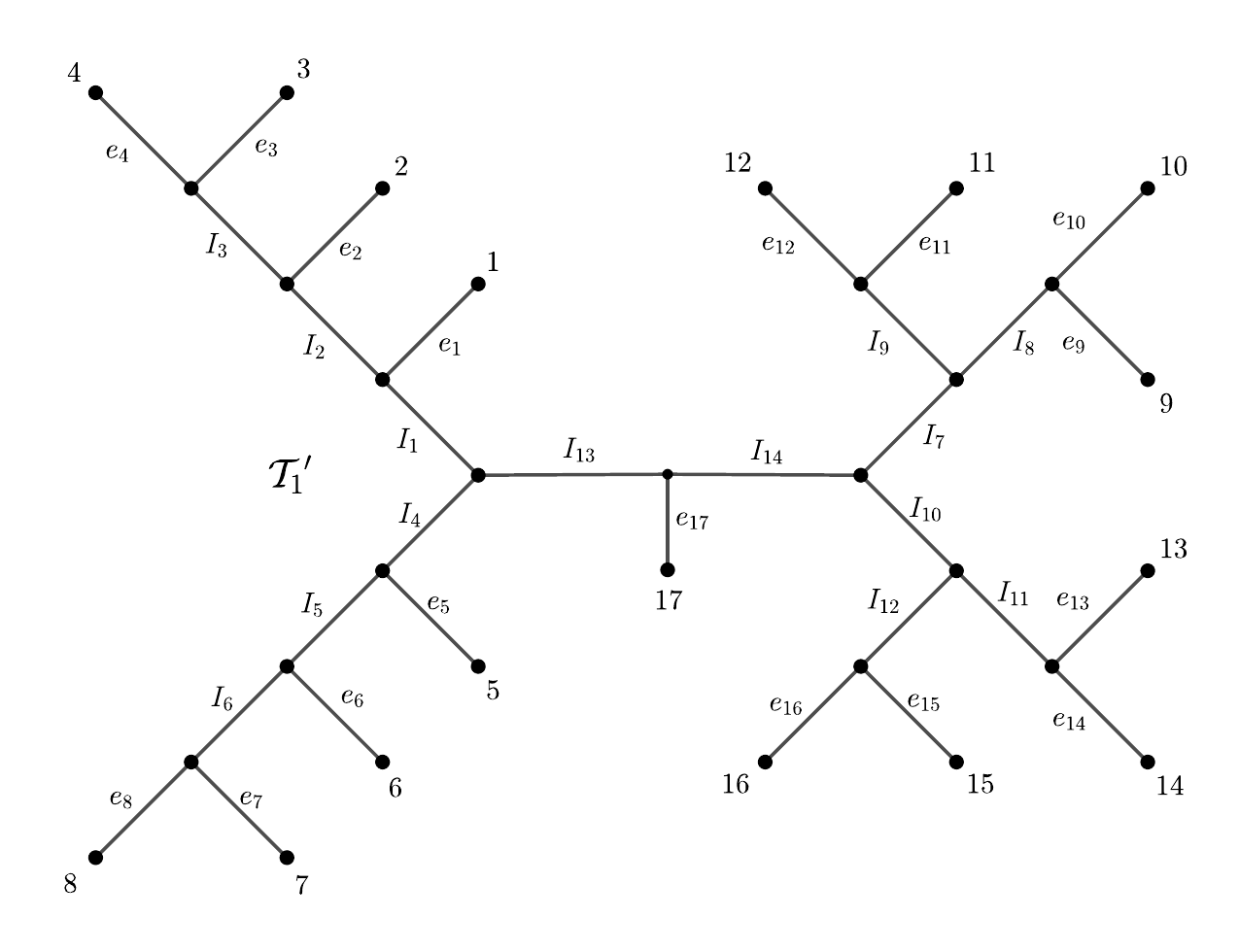} \\
	\includegraphics[scale=0.9]{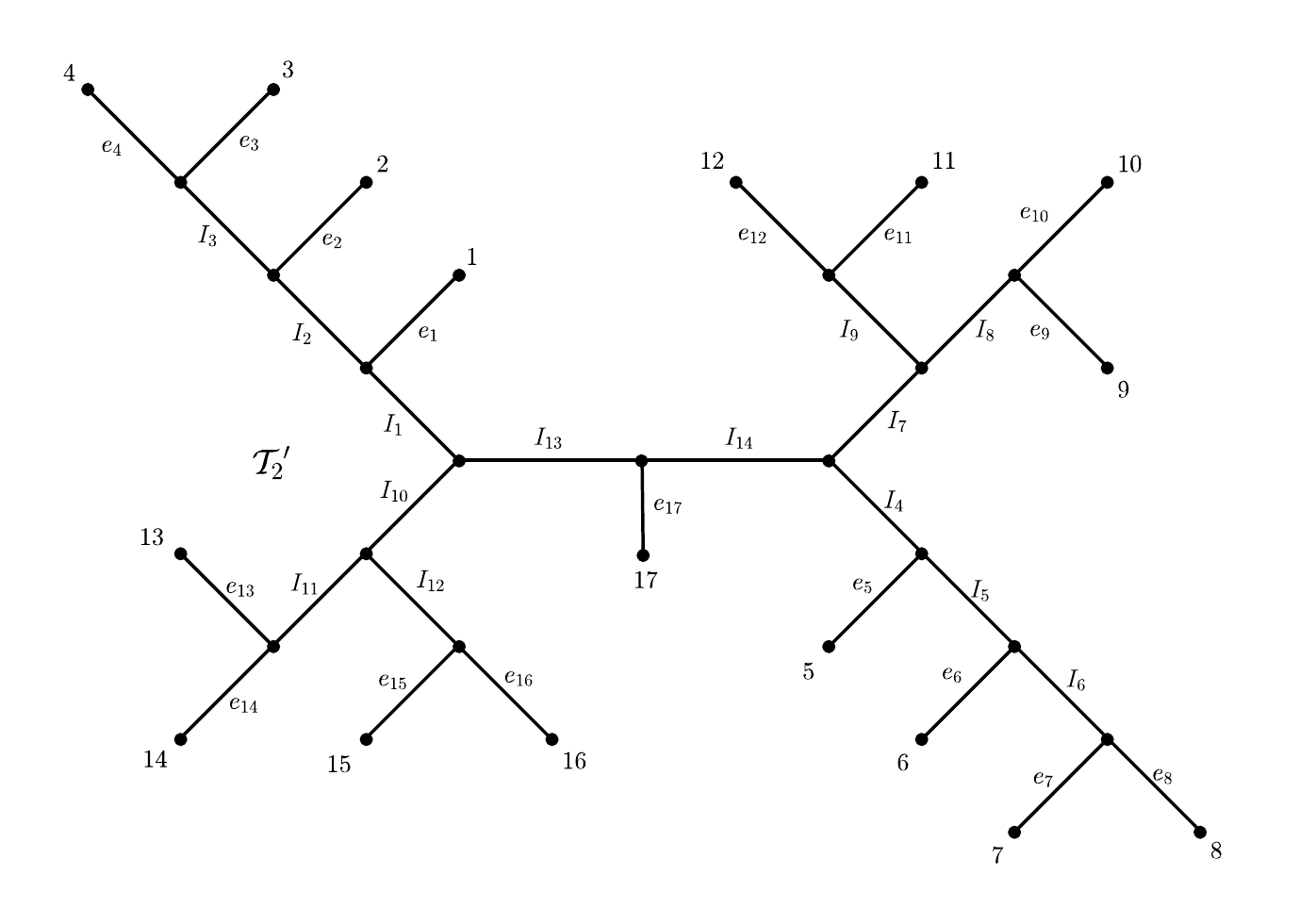}
	\caption{Two non-isomorphic trees $\mathcal{T}_1'$ and $\mathcal{T}_2'$ on 17 leaves that are found by Algorithm \ref{Algorithm} but do not induce permutation-equivalent Shapley transformation matrices and permutation-equivalent null spaces.}
	\label{Fig6}
\end{figure}

\end{appendix}

\end{document}